\newtheorem{conj}{Conjecture}[section]
\newtheorem{thrm}{Theorem}[section]
\newtheorem{coro}{Corollary}[section]
\newtheorem{defn}{Defenition}[section]
\newtheorem*{remk*}{Remark} 
\newtheorem{lema}{Lemma}[section]
\begin{document}

\title{The PCP-like Theorem for Sub-linear Time Inapproximability}

\author[1]{Hengzhao Ma}
\author[2,\Letter]{Jianzhong Li}
\affil[1]{hz.ma@siat.ac.cn}
\affil[2]{lijzh@siat.ac.cn}
\affil[1,2]{Shenzhen Institute of Advanced Technology, Chinese Academy of Sciences}

\date{}
\maketitle              

\section*{Abstract}

In this paper we propose the PCP-like theorem for sub-linear time inapproximability. Abboud et al. have devised the distributed PCP framework for proving sub-quadratic time inapproximability. Here we try to go further in this direction. Staring from SETH, we first find a problem denoted as Ext-$k$-SAT, which can not be computed in linear time, then devise an efficient MA-like protocol for this problem. To use this protocol to prove the sub-linear time inapproximability of other problems, we devise a new kind of reduction denoted as Ext-reduction, and it is different from existing reduction techniques. We also define two new hardness class, the problems in which can be computed in linear-time, but can not be efficiently approximated in sub-linear time. Some problems are shown to be in the newly defined hardness class.

\noindent\\
\textbf{Keywords:} PCP Theorem, Sub-linear Time, Inapproximability, SETH.

\clearpage

\section{Introduction}\label{sec:intro}

One of the most important task of computation theory is to identify the hardness of problems, and thus distinguish the hard problems with the easy ones. Problems with similar hardness are grouped into complexity classes, and the complexity classes form a plentiful of hierarchy, such as EXPTIME, PSPACE, NP and P, all the way down to the hierarchy. In common sense the problems in $P$, which are the problems that can be solved by Turing machine in polynomial time, are thought to be practically tractable. The most famous and important open problem, which is to prove or falsity $P=NP$, reveals the common belief that polynomial time solvable means easy, otherwise hard. That is why the discovery of polynomial time algorithm for linear programming is such a great shock \cite{Khachiyan1980}, which was conjectured to be in NP in quite a period of time. 

However, we are now in a time of massive data. The problem we are facing may remain the same, but the size of data grows tremendously. Even the most basic problems like sorting are granted terabytes of data. Using elaborately designed parallel algorithms, sorting 100TB of data takes around 100 seconds, but in serial environment the same work may consume more than days \cite{SortBenchmark}. Another example is the deep learning tasks. Fed with millions of data records, the training phase of deep learning systems usually lasts days or months. In theoretical point of view, these problems are all solvable in polynomial time, but in some circumstances consuming days of time may not be the original intention of \textit{tractable}. What's more, what if more data need to be dealt with, like petabyes ($2^{40}$B),  exabytes ($2^{50}$B) and even more? In fact scientific data is growing near or out of this order of magnitude. For example, the Large Synoptic Survey Telescope (LSST) \cite{Ivezic2019} can generate 1.25 PB of data per day, and over 3 years the accumulated data would exceed 1EB. It is no doubt that how hard it is to find something valuable in such amount of data. 

Reflecting the practical examples into theory, the following problem arises: is it still appropriate to take polynomial time as the criterion of \textit{tractable}, under the massive data circumstances? Let us now consider the following simple calculation for linearly reading data from SSD hard drive. The maximum reading bandwidth for SSD's is around 8GB/s, so merely reading 1PB data needs 34.7 hours. If the data size reaches 1EB, then it takes more than 4 years to linearly scan them all. This example shows that even linear-time algorithms could be too slow when facing with massive data. Thus, \textit{sub-linear} time may be appropriate for the criterion of tractable in massive data circumstances.

Actually sub-linear time algorithm is not a fresh new research area. Perhaps the most familiar sub-linear time algorithm is binary search on sorted array, which takes $O(\log{n})$ time. The time complexity can be further reduced to $O(\log{\log{n}})$ if using interpolation search \cite{Li1987}. The two search strategies can find the exact result, but for most problems sub-linear time rules out exact results, since reading the entire input is impossible. Thus to achieve sub-linear time it is rational to consider approximate algorithms. 

Now consider the situation that a researcher wants to devise a sub-linear time approximate algorithm for one specific problem $P$. Before he even starts to work on designing the algorithm, one primary question he must answer is that, dose sub-linear time approximate algorithm exist for problem $P$? Or equivalently, is $P$ sub-linear time approximable? However, so far there is no theoretical tools for proving sub-linear time inapproximability. 
It is well known that polynomial time inapproximability is based on the PCP theorem \cite{Arora1998a}. Hence, as a natural comparison, one would ask, if there exists a PCP-like theorem for sub-linear time inapproximability?

We notice that Abboud et al. proposed the distributed PCP framework in \cite{Abboud2017}. It is based on the Strong Exponential Time Hypothesis, and most results proved by distributed PCP are sub-quadratic inapproximability. It is the first PCP-like theorem for proving inapproximability results in $P$. The existing works have shown that the distributed PCP can be used to prove $O(n^{k-\epsilon})$-time inapproximability for arbitrary integer $k\ge 2$, but it can not apply to $O(n^{1-\epsilon})$-time inapproximability as far as we know. In order to finish the last step between PCP theorems and sub-linear time inapproximability, we propose our PCP-like theorem in this paper. The main contributions of this paper are listed below.

\begin{enumerate}
	\item We propose the PCP-like theorem for proving sub-linear time inapproximability, which is denoted as Ext-PCP theorem. This completes the theoretical toolsets for proving hardness of approximation in P.
	\item Using the Ext-PCP theorem, several problems are proved to be sub-linear time inapproximable. The proof involves a new kind of reduction called Ext-reduction. Different with existing reduction techniques, the Ext-reduction is based on the probabilistic verification system of the Ext-PCP theorem.
	\item Summarizing the proved sub-linear time inapproximability results, we define two kinds of hardness classes, the problems in which can be computed in linear time, but can not be efficiently approximated in sub-linear time.
\end{enumerate}

The rest of this paper is organized as follows. We first review some related works in Section \ref{sec:rwork}. The Ext-PCP theorem is proposed in Section \ref{sec:sublinear-pcp}, which is the main contribution of this paper. 
As several applications of the sub-linear PCP theorem, in Section \ref{sec:app} we prove some new sub-linear time inapproximability results. Based on the obtained results, we propose a new reduction for proving sub-linear time inapproximability, denoted as Ext-reduction, and propose two hardness class which are Strictly Linear Time Problems and Parameterized Linear Time Problems. Finally Section \ref{sec:conc} concludes this paper.

\section{Related works}\label{sec:rwork}

\subsection{PCP theorem and hardness of approximation}
The PCP (Probabilistically Checkable Proofs) theorem \cite{Arora1998a,Arora1998} may be the most important result ever since the NP-complete theory is found \cite{Cook1971,Karp1972,Levin1973}. Formally, a PCP system is a probabilistic polynomial-time Turing machine $M$, which is given an input $x$ and an array of bits $\Pi$ called the proof string. The verifier is able to randomly access the proof string, i.e., given an address string $i$ the verifier can directly read the bit $\Pi[i]$. The verifier works as follows. It tosses random coins to acquire a random string $\tau$, reads a number of positions in $\Pi$, and decide to accept or reject. Define $M^{\Pi}(x,\tau)=1$ if M accepts $x$ using $\tau$ after examining $\Pi$, otherwise $M^{\Pi}(x,\tau)=0$. 

There are two parameters that determine the power of PCP systems, which are the length of the random string $r(n)$, and the number of positions in the proof string that allowed to be queried $q(n)$. We say a verifier is $(r(n),q(n))$-constrained if it uses $O(r(n))$ bits of random string, and queries at most $O(q(n))$ bits from the proof string. Then define $PCP(r(n),q(n))$ to be a language class as follows.

\begin{defn}
	A language $L\in PCP(r(n),q(n))$ iff there exists a $(r(n),q(n))$-constrained verifier $M$ which behaves as below for every input $x$.
	\begin{itemize}
		\item If $x\in L$, there exists a proof $\Pi$ that causes $M$ to accept for every random string, i.e., 
		$\mathop{Pr}\limits_{\tau}[M^{\Pi}(x,\tau)]=1.$
		\item If $x\notin L$, the for all proofs $\Pi$, 
		$\mathop{Pr}\limits_{\tau}[M^{\Pi}(x,\tau)]\le \frac{1}{2}.$
	\end{itemize}
\end{defn}

The following is the PCP theorem, due to Arora et al. \cite{Arora1998}.
\begin{thrm}[\cite{Arora1998}]\label{thrm:classic-pcp}
	$NP=PCP(\log{n},1).$
\end{thrm}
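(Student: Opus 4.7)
The plan is to establish both containments separately. The easy direction $PCP(\log n, 1) \subseteq NP$ is handled by direct simulation: given a verifier $M$ that uses $O(\log n)$ random bits and reads $O(1)$ proof positions, the total number of proof addresses ever consulted across all $2^{O(\log n)} = \mathrm{poly}(n)$ random strings is polynomial, so without loss of generality $\Pi$ has polynomial length. An $NP$ machine on input $x$ non-deterministically guesses $\Pi$, then deterministically iterates over every random string $\tau$, simulates $M^{\Pi}(x,\tau)$, and accepts iff all simulations accept. Completeness and soundness transfer immediately from the $PCP$ definition.

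For the hard direction $NP \subseteq PCP(\log n, 1)$ I would fix an $NP$-complete anchor such as $3$-$SAT$ and build the verifier via proof composition of two intermediate systems. The outer system achieves parameters $(O(\log n), \mathrm{polylog}(n))$ by arithmetizing the $3$-$SAT$ instance over a finite field, having the prover supply low-degree extensions of a satisfying assignment together with sum-check certificates, and having the verifier perform a low-degree test plus a sum-check consistency check. The inner system achieves parameters $(\mathrm{poly}(n), O(1))$ via Hadamard-encoded assignments verified by linearity testing and a quadratic-equations check. Composition replaces the outer verifier's final decision, which depends on $\mathrm{polylog}(n)$ answer bits, by a run of the inner verifier on those bits; since the inner system takes $\mathrm{poly}(\mathrm{polylog}(n)) = 2^{O(\log \log n)}$ randomness on an input of length $\mathrm{polylog}(n)$ and the outer randomness already covers $O(\log n)$ bits, the composed system meets the $(O(\log n), O(1))$ bound.

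The main obstacle is the analysis of the low-degree test, i.e.\ showing that any proof string which passes the test with non-negligible probability must be close in Hamming distance to a genuine low-degree polynomial. This list-decoding style statement is the technical heart of the Arora--Safra--Lund--Motwani--Sudan--Szegedy construction and requires careful combinatorial arguments on lines through points of $\mathbb{F}^m$, together with a self-correction procedure so that the verifier may access the polynomial through the corrupted table. Ensuring that composition preserves soundness also demands that both inner and outer systems be formulated with an explicit robustness / PCP-of-proximity guarantee, rather than merely the plain-vanilla definition given above.

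As a cleaner alternative I would consider Dinur's gap-amplification route: obtain an initial $3$-$SAT$-style constraint graph with a tiny constant soundness gap, then alternate a graph-powering step (composing constraints along length-$t$ walks on an expander to multiply the gap by roughly $\sqrt{t}$ while blowing up the alphabet) with composition against a fixed constant-size assignment tester (which restores a binary alphabet at polynomial cost in the instance size). After $O(\log n)$ iterations the gap becomes an absolute constant while the instance size remains polynomial, giving directly a verifier with $O(\log n)$ randomness and $O(1)$ queries. In this route the principal hurdle is the expander-walk analysis certifying the $\sqrt{t}$ gap amplification, which replaces low-degree testing as the central technical lemma.
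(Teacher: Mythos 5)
The paper does not prove this statement at all: it is the Arora--Lund--Motwani--Sudan--Szegedy PCP theorem, imported verbatim from \cite{Arora1998} as background, so there is no in-paper argument to compare yours against. Judged on its own, your outline is the standard one --- the easy containment $PCP(\log n,1)\subseteq NP$ by guessing the (polynomial-length) proof and enumerating all $2^{O(\log n)}$ random strings is correct and complete as stated, and for the converse you correctly identify the two known routes (algebraic proof composition, and Dinur's gap amplification) together with their respective technical hearts (low-degree testing with robustness/PCPP guarantees, and the expander-walk amplification lemma). As a plan it is faithful to the literature, though of course it defers rather than supplies the hard lemmas.

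One concrete error in your composition paragraph: a \emph{single} composition of the $(O(\log n),\mathrm{polylog}(n))$ outer verifier with the Hadamard-based inner verifier does not give $(O(\log n),O(1))$. The inner verifier on an input of length $\mathrm{polylog}(n)$ uses $\mathrm{poly}(\mathrm{polylog}(n))=(\log n)^{O(1)}$ random bits, and $(\log n)^{O(1)}$ is \emph{not} $O(\log n)$ --- your rewriting of it as $2^{O(\log\log n)}$ obscures this, since $2^{c\log\log n}=(\log n)^{c}$ exceeds $\log n$ for $c>1$. One round of this composition therefore only yields $NP\subseteq PCP(\mathrm{polylog}(n),1)$, i.e.\ quasi-polynomial proof length. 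The standard repair is a two-level composition: first compose the $(O(\log n),\mathrm{polylog}(n))$ system with itself to drive the query complexity down to $\mathrm{poly}(\log\log n)$, and only then compose with the Hadamard inner verifier, whose randomness $\mathrm{poly}(\mathrm{poly}(\log\log n))$ is then $o(\log n)$. Your Dinur-route alternative does not suffer from this issue and, as you note, trades the low-degree test for the expander-powering analysis.
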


The impressive power of the PCP theorem is that it not only provides a new characterization of NP, but also acts as the basic tool for proving NP-hardness of approximation problems. The basic result is for $\rho$-gap-$k$-SAT, which is defined as follows.

\begin{defn}[$\rho$-gap-$k$-SAT]
	Given a $k$-CNF formula $\varphi$ with $n$ variables and $m$ clauses, the problem is to distinguish the following two situations:
	\begin{itemize}
		\item \textbf{(\textbf{Completeness})} $\varphi$ is satisfiable, i.e., there exists an assignment $\alpha$ such that $\varphi(\alpha)=1$.
		\item \textbf{(\textbf{Soundness})} $\varphi$ is not satisfiable but what's more, for every possible assignment $\alpha$ at most $\rho m$ number of clauses can be satisfied.
	\end{itemize}
\end{defn}

A reduction from every language $L\in NP$ to $\rho$-gap-$k$-SAT can be constructed with the aid of the PCP Theorem \ref{thrm:classic-pcp}, showing that there exists $\rho>0$ such that approximating $k$-SAT problem within approximation ratio $\rho$ is NP-hard \cite{Arora1998a}. A strong characteristic of the inapproximability results proved via PCP theorem is that, there is usually a \textit{gap} between the 'YES' instances and 'No' instances.

With the hardness of $\rho$-gap-$k$-SAT proved, a lot of other inapproximability results can be reached via polynomial time reductions, such as vertex cover \cite{Papadimitriou1988}, metric TSP \cite{Papadimitriou1993}, Steiner Tree \cite{Bern1989}, et. See \cite{Arora2003} for a good survey.

\subsection{Fine-grained complexity}
The fine-grained complexity focuses on proving running time lower bounds for problems in $P$. In the history of researchers trying to devise more efficient algorithms for every problem, there are many problems that further improvement on the time complexity is seemingly impossible. For NP-complete problems, using polynomial time reductions from 3SAT and assuming $P\ne NP$, it can be proved that no polynomial time algorithm exists for a specific problem, such as vertex cover, traveling salesman, et. But for problems known to have polynomial time algorithms, proving the lower bound for them needs a wholly different tool set. Basically we need the following primary components. A hypothesis stating a time lower bound on a specific well-studied problem, and a kind of reduction used to prove the lower bound for other problems. There are three wildly used hypotheses, which are SETH, 3SUM and APSP, and the reduction used here is called fine-grained reduction.

\begin{conj}[SETH]\label{conj:seth}
	For any $\epsilon>0$ there exists $k\ge 3$ such that $k$-SAT problem with $n$ variables can not be solved in $O(2^{(1-\epsilon)n})$ time. 
\end{conj}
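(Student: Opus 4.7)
The statement is framed as a conjecture rather than a theorem, and rightly so: any proof of SETH would in particular establish $P \neq NP$ and, beyond that, would rule out a uniform sub-exponential speedup of brute-force search across the entire family of $k$-SAT problems. No unconditional lower-bound technique currently reaches this strength. With that caveat stated, the shape an attack would take is the following. I would fix an arbitrary $\epsilon > 0$ and suppose for contradiction that for every $k \geq 3$ there is an algorithm $A_k$ deciding $k$-SAT in time $O(2^{(1-\epsilon)n})$. The goal would then be to extract from the sequence $\{A_k\}$ a uniform structural object — a family of sub-exponential circuits, branching programs, or proof-search schemes covering all $k$-CNFs — whose mere existence contradicts a known unconditional lower bound or a widely accepted pseudorandomness assumption.

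The most plausible concrete route begins with the Sparsification Lemma of Impagliazzo, Paturi and Zane, which reduces general $k$-SAT to the case of sparse instances with $m = O(n)$ clauses. On such instances, a $2^{(1-\epsilon)n}$ algorithm implicitly compresses the search for a satisfying assignment below the information-theoretic cost of specifying $n$ independent bits. I would therefore try to convert $A_k$ into either a pseudorandom generator fooling $k$-CNFs or a covering code of sub-trivial size, and then argue by a counting step that random $k$-CNFs with $m = \Theta(n)$ resist such compression as $k \to \infty$. As a complementary angle, one could attempt to sharpen the analyses of Schöning's algorithm and PPSZ, showing that their savings are tight up to a $2^{o(n)}$ factor and that no branching strategy can do better uniformly in $k$.

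The main obstacle — and the reason SETH will almost certainly remain a hypothesis for the foreseeable future — is that any such argument must simultaneously evade the relativization, algebrization, and natural proofs barriers. Oracle constructions readily give $O(2^{n/2})$-time $k$-SAT algorithms in relativized worlds, and any combinatorially natural property distinguishing ``easy'' $k$-CNFs from hard ones would likely break standard cryptographic pseudorandom generators. For these reasons I do not expect the plan above to be carried to completion. Instead, in keeping with the methodology of fine-grained complexity and with the intent of this paper, I take SETH as an axiom and use Conjecture \ref{conj:seth} in the standard way: as the starting point for the chain of reductions that culminates in the Ext-PCP theorem developed in the subsequent sections.
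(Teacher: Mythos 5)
You correctly recognize that this statement is a conjecture (the Strong Exponential Time Hypothesis), which the paper itself does not prove and merely adopts as a working assumption for its reductions; your decision to treat it as an axiom matches the paper's treatment exactly. The speculative attack sketch is a reasonable survey of why SETH resists proof, but no proof is required or given in the paper, so nothing further is needed.
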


\begin{conj}[3SUM]
	The 3SUM problem is given a set $S$ of $n$ integers from $\{-n^c,\cdots,n^c\}$ for some constant $c$, and to determine whether there exists $x,y,z\in S$ such that $x+y+z=0$. The hypothesis is that for $c=4$, 3SUM problem can not be solved by any randomized algorithm in $O(n^{2-\epsilon})$ time for any $\epsilon>0$.
\end{conj}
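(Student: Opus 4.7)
The plan must begin by acknowledging that the 3SUM hypothesis is a conjecture rather than a theorem with a known proof; it stands in the same epistemic position as SETH itself, recording a failure to improve on a natural running time rather than an established lower bound. A genuine unconditional proof of the $\Omega(n^{2-\epsilon})$ bound for randomized word-RAM algorithms would be a major breakthrough, on par with showing super-linear circuit lower bounds for an explicit problem in $\mathsf{NP}$. My proposal is therefore aspirational: I sketch the avenues one would try and the obstacle each runs into.

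The first avenue is to prove the bound unconditionally in a restricted computational model and then attempt to lift it. In the $3$-linear decision tree model a matching $\Omega(n^{2})$ bound is known via cell-counting on hyperplane arrangements, and the natural attempt is to extend this to the word-RAM by simulation. This attempt is blocked, because Gr{\o}nlund and Pettie gave an $O(n^{3/2}\sqrt{\log n})$ algorithm already in the slightly richer $4$-linear decision tree model, ruling out any naive extension and suggesting that an $n^{2-o(1)}$ lower bound must use more than the order structure of the inputs. I would therefore try to exploit the restriction $c=4$ built into the conjecture: inputs are integers in $\{-n^{4},\dots,n^{4}\}$, and combining combinatorial-geometric arguments with additive-combinatorial estimates on the number of low-energy subsets might recover an $n^{2}$-type bound under that integer promise.

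A second avenue is conditional: reduce another well-studied hard problem to 3SUM. The natural candidate, in view of the rest of this paper, is to reduce $k$-SAT under SETH to 3SUM via a split-and-list construction, assigning integer codes to partial assignments and aligning the three-term sum condition with clause satisfaction. No such reduction is currently known; in fact, it is widely believed that SETH and the 3SUM hypothesis are independent, since a sub-quadratic algorithm for 3SUM would not obviously yield a sub-exponential algorithm for SAT, so one would also have to invent a new ``bridging'' hypothesis between the two.

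The hard part, and the reason the statement appears as a Conjecture rather than a Theorem, is that any unconditional argument must simultaneously beat the existing polylogarithmic-speedup word-RAM algorithms and rule out the bit-level parallelism that the word-RAM model grants for free. Absent a breakthrough in unconditional lower bounds capable of handling both constraints, the best a proof proposal can offer is to adopt the hypothesis outright and use it as a foundational assumption, which is the route this paper and the fine-grained complexity community at large take.
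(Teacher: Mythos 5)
Your assessment is exactly right: the paper offers no proof of this statement because it is the 3SUM hypothesis, stated as a conjecture and cited as background from the fine-grained complexity literature, not as a result established in the paper. Your conclusion --- that one must adopt it as a foundational assumption, since no unconditional or SETH-based derivation is known --- matches the paper's treatment precisely, and your surrounding discussion of decision-tree bounds and the believed independence of SETH and 3SUM is accurate context rather than a gap.
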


\begin{conj}[APSP]
	APSP stands for All Pair Shortest Path problem. Given a graph $G=(V,E)$ with $n$ nodes and integral edge weights $w:E\rightarrow \{-M,\cdots,M\}$, where $M=poly(n)$, APSP is to compute for every $u,v\in V$, the smallest distance $d(u,v)$ in G, where the distance is defined as the sum of weights along the shortest path from $u$ to $v$. The conjecture says that no randomized algorithm can solve APSP in $O(n^{3-\epsilon})$ time for any $\epsilon>0$ on a graph without negative cycles.
\end{conj}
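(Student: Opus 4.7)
The APSP conjecture is a well-established open problem in fine-grained complexity; no unconditional proof is currently known, and the statement is introduced here only as a working hypothesis, analogous to the role that $P\ne NP$ plays in the classical theory of NP-hardness. Thus the plan is not to \emph{prove} the conjecture outright, but to explain the circumstantial evidence that motivates it and to describe the shape any future proof would be forced to take.

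First I would catalogue the known algorithmic barriers. Despite decades of effort, the fastest known algorithm for APSP, due to Williams, runs in time $n^3/2^{\Omega(\sqrt{\log n})}$, which shaves only a sub-polynomial factor off the textbook $O(n^3)$ Floyd--Warshall bound; no truly sub-cubic $O(n^{3-\epsilon})$ algorithm is known for any $\epsilon>0$, even when randomization is allowed. Next I would appeal to the rich web of fine-grained equivalences surrounding APSP: a truly sub-cubic algorithm for APSP would immediately yield one for Negative Triangle, Min-Plus Matrix Product, Metricity, Replacement Paths, Radius, Second Shortest Path, and Betweenness Centrality, among many others. The observation that so many distinct problems, each with its own long algorithmic history, are stuck at exactly the same cubic barrier is the main positive evidence that the conjecture is true, and it is what makes the conjecture a useful anchor for fine-grained lower bounds.

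To actually prove the conjecture, one would need to establish a super-quadratic unconditional lower bound for some concrete problem on a general model such as the word-RAM, and then propagate it via fine-grained reductions to APSP. The hard part---indeed the only truly hard part---is the first step: no technique currently exists for proving super-linear time lower bounds on natural problems in such a model, and overcoming this barrier is widely regarded as a breakthrough well beyond the reach of the present paper. Accordingly I would follow standard practice in fine-grained complexity and adopt the APSP conjecture as an axiom on the same footing as SETH and 3SUM already introduced, using it in later sections only to derive conditional hardness results rather than attempting an unconditional proof here.
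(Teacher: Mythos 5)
You have correctly identified that the APSP statement is a conjecture, not a theorem: the paper offers no proof and simply adopts it (alongside SETH and 3SUM) as a standard fine-grained hardness hypothesis, exactly as you propose. Your treatment matches the paper's, so nothing further is needed.
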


\begin{defn}[Fine-grained reduction]
	Let $A$ and $B$ be two computational problems, and assume $a(n)$ and $b(n)$ be their conjectured running time lower bounds, respectively. We say $A$ can be $(a,b)$-reduced to $B$, denoted as $A\le_{a,b} B$, if for every $\epsilon >0$, there exists $\delta>0$, and an algorithm $R$ for $A$ that runs in $a(n)^{1-\delta}$ time, making $q$ calls to an oracle algorithm for $B$ with query lengths $n_1,\cdots,n_q$, where 

	$$\sum\limits_{i=1}^{q}{(b(n_i))^{1-\epsilon}} \le  (a(n))^{1-\delta}.$$
	
	If $A\le_{a,b}B$ and $B\le_{b,a}A$, then we say that $A$ and $B$ are fine-grained equivalent, $A\equiv_{a,b}B$.
\end{defn}

Due to space limitation we only briefly introduce some lower bound results based on SETH. The most important result may be on $k$-Orthogonal Vectors ($k$-OV), which is defined below.

\begin{defn}[$k$-OV]
	For $k\ge 2$, $d=\omega(\log{n})$, given $k$ sets $A_1,\cdots,A_k\subseteq \{0,1\}^d$ with $|A_i|=n$ for $1\le i\le k$, determine whether there exist $a_1\in A_1,\cdots,a_k\in A_k$ such that $a_1\cdot\ldots\cdot a_k=0$ where $a_1\cdot\ldots\cdot a_k:=\sum\limits_{i=1}^{d}\prod\limits_{j=1}^{k}{a_j[i]}$.
\end{defn}

There is a fine-grained reduction from SAT which proves that $k$-OV can not be solved in $O(n^{k-\epsilon})$ time even by randomized algorithms, assuming SETH \cite{Williams2018}. 

There are a lot of problems that are fine-grained equivalent to $k$-OV, such as Batch Subset Query problem\cite{Agrawal2010,Goel2010,Melnik2003,Ramasamy2000}. Some other problems can be reduced from $k$-OV, such as graph diameter \cite{Roditty2013}, closest pair in Hamming space \cite{Alman2015}, Longest Common Subsequence \cite{Abboud2015,Bringmann2015}, and Dominating Set \cite{Karthik2019}, et. See \cite{Williams2018} for a full survey.

\subsection{Sub-linear time algorithms}\label{subsec:sub-linear-algs}
Initially the design of sub-linear time algorithms focused on Property Testing, which is another kind of definition for approximation. The property testing problem is based on the definition of $\epsilon$-far from a specific property $P$. 
\begin{defn}
	An input $x$, represented as a function $x:\mathcal{D}\rightarrow \mathcal{R}$, is $\epsilon$-close to satisfying property $P$, if there exists some $y$ satisfying $P$ and $x$ differs with $y$ on at most $\epsilon|\mathcal{D}|$ places in their representation. Otherwise $x$ is said to be $\epsilon$-far from satisfying $P$.
\end{defn}

\begin{defn}\label{defn:property-test}
	Let $P$ be a property, and let $c,s$ be constants in $[\frac{1}{2},1]$. For input $x$ of size $n=|\mathcal{D}|$ and parameter $\epsilon$, a property tester for $P$ must satisfy:
	\begin{itemize}
		\item (\textbf{Completeness}) If $x$ satisfies property $P$, the tester must accept $x$ with at least $c$ probability.
		\item (\textbf{Soundness}) If $x$ is $\epsilon$-far from satisfying $P$, the tester must reject $x$ with at least $s$ probability.
	\end{itemize}
\end{defn}

There are easier problems such as testing the monotonicity of an array \cite{Ergun1998}, and more complex problems such as testing the diameter of graphs \cite{Parnas1999}. See \cite{Rubinfeld2011} for more related works.

Recently many sub-linear time algorithms are found in other areas, such as computational geometry, graph problems, and algebraic problems. Here we list some examples.
For computational geometry problems, there is an $O(\sqrt{n})$ time algorithm for deciding whether two polygons intersect \cite{Chazelle2005}. For graph problems, there is an $O(\sqrt{n}/\epsilon)$ algorithm for approximating the average degree within a factor of $2+\epsilon$ \cite{Feige2006}, and there is an algorithm for approximating the weight of the minimum spanning tree within a factor of $\epsilon$ which runs in $\tilde{O}(D\cdot W/\epsilon^2)$ time, where $D$ is the maximum degree and $W$ is the maximum weight \cite{Chazelle2005a}. For algebraic problems, testing whether a function $f:\mathcal{D}\rightarrow \mathcal{R}$ is a homomorphism \cite{Blum1993}  has been an important problem, and it is also used in the construction of PCP systems \cite{Arora1998}. See \cite{Rubinfeld2011} for more related works.

\subsection{Distributed PCP}

The distributed PCP \cite{Abboud2017} is the work mostly close to this work, and it is also the most important inspiration of this work. Unlike the classic PCP system, in the distributed PCP system there are four, not two, parties, which are named Alice, Bob, Merlin and Veronica (the verifier). Alice and Bob each holds half of the input without knowing each other's content, and Merlin acts as an all-knowing advice provider. The model originates from the Arthur-Merlin communication model \cite{Aaronson2009}, and is inspired by the truth that most problems related to SETH are given two parts with equal positions as input. The distributed PCP theorem in \cite{Abboud2017} is recited below.

\begin{thrm}[Distributed PCP]\label{thrm:distributed-pcp}
	Let $\varphi$ be a boolean CNF formula with $n$ variables and $m=O(n)$ clauses. There is a non-interactive protocol where:
	\begin{itemize}
		\item Alice, given the CNF $\varphi$, partial assignment $\alpha\in\{0,1\}^{n/2}$, and advice $\mu\in\{0,1\}^{o(n)}$, outputs a string $a^{\alpha,\mu}\in \{0,1\}^{2^{o(n)}}$.
		\item Bob, given $\varphi$ and partial assignment $\beta\in \{0,1\}^{n/2}$, outputs a string $b^{\beta}\in \{0,1\}^{2^{o(n)}}$.
		\item The verifier, given input $\varphi$, tosses $o(n)$ coins, non-adaptively reads $o(n)$ bits from $b^{\beta}$, and adaptively reads one bit from $a^{\alpha,\mu}$; finally, the verifier returns Accept or Reject.
	\end{itemize}
	If the combined assignment $(\alpha,\beta)$ satisfies $\varphi$, there exists advice $\mu^*$ such that the verifier always accepts. Otherwise, i.e., if $\varphi$ is unsatisfiable, for every $\mu$ the verifier rejects with probability at least $1-1/2^{n^{1-o(1)}}$.
\end{thrm}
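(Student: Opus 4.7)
The plan is to adapt the Aaronson--Wigderson MA-communication protocol for Set Disjointness to the SAT setting via arithmetization and the sum-check protocol, collapsing Merlin's interactive sum-check messages into one short advice string $\mu$.

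First, arithmetize $\varphi$ over a finite field $\mathbb{F}$ with $|\mathbb{F}| = 2^{n^{1-o(1)}}$: for each clause $C_j$, build a constant-degree polynomial $\tilde{C}_j(x_1,\ldots,x_n)$ that vanishes on satisfying assignments, and set $p_\varphi(x) := \sum_j \tilde{C}_j(x)$, so that $(\alpha,\beta)$ satisfies $\varphi$ iff $p_\varphi(\alpha,\beta)=0$ in $\mathbb{F}$. Re-index clauses by points of a combinatorial cube $H^t \subseteq \mathbb{F}^t$ with $|H|^t \ge m$, so that $p_\varphi(\alpha,\beta) = \sum_{i \in H^t} A(\alpha,i)\,B(\beta,i)$ for low-degree $A,B$; this rewrites satisfiability as the claim that a sum-check instance on $t$ variables equals zero.

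Next, I would let Merlin's advice $\mu$ be the transcript of an honest sum-check prover on this sum: $t$ univariate polynomials of constant degree over $\mathbb{F}$, giving $|\mu| = t \cdot O(\log|\mathbb{F}|) = o(n)$ once parameters are tuned. Define Bob's string $b^{\beta}$ to be the table of the low-degree extension of the $\beta$-factor, indexed by final sum-check points $r \in \mathbb{F}^t$, of size $|\mathbb{F}|^t = 2^{o(n)}$. Define Alice's string $a^{\alpha,\mu}$ to be a table indexed by verifier coin strings $\tau \in \{0,1\}^{o(n)}$ that stores, at position $\tau$, a single consistency bit asserting that Merlin's claimed value for the $\alpha$-factor at the final point $r(\tau)$ matches the true evaluation. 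The verifier samples $\tau$, computes $r(\tau)$ from $\mu$ and the coins, non-adaptively reads the $o(n)$ bits of $b^{\beta}$ needed to recover the $\beta$-factor value at $r(\tau)$, uses this together with $\mu$ to determine which single index of $a^{\alpha,\mu}$ to probe, and accepts iff the retrieved bit and the arithmetic check agree.

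Completeness is immediate: an honest Merlin supplies the true sum-check polynomials, Bob's encoding is correct by construction, and Alice's table always reports ``consistent''. Soundness is the classical sum-check guarantee: by Schwartz--Zippel applied round-by-round, any incorrect $\mu$ is caught with probability at least $1 - O(t)/|\mathbb{F}| \ge 1 - 2^{-n^{1-o(1)}}$. The main technical obstacle is the joint parameter balance --- we need $t \cdot \log|\mathbb{F}| \in o(n)$ so that the advice, proof lengths, and verifier randomness all stay in budget, while simultaneously $|\mathbb{F}| \ge 2^{n^{1-o(1)}}$ for soundness and $|H|^t \ge m = O(n)$ to index all clauses. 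Balancing these (for instance $t \approx n/\log n$, $|H| = O(1)$, $\log|\mathbb{F}| = n/\mathrm{polylog}(n)$) and verifying that a single adaptive bit from Alice really suffices to close the consistency check is the delicate core of the construction.
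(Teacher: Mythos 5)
First, note that the paper does not prove this statement at all: Theorem~2.2 is recited verbatim from Abboud et al.~\cite{Abboud2017} as an imported result, so there is no in-paper proof to compare against. Judged on its own merits, your proposal starts from the right place (the Aaronson--Wigderson arithmetization of the disjointness structure $\sum_{j} A(\alpha,j)B(\beta,j)=0$), but the central mechanism you propose --- running the $t$-round sum-check and ``collapsing Merlin's interactive sum-check messages into one short advice string $\mu$'' --- does not work in a non-interactive protocol, and the failure is not merely one of soundness. The honest prover's round-$i$ polynomial in sum-check is $g_i(X)=\sum_{x_{i+1},\dots,x_t\in H}p(r_1,\dots,r_{i-1},X,x_{i+1},\dots,x_t)$, which \emph{depends on the challenges} $r_1,\dots,r_{i-1}$. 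Since $\mu$ must be fixed before the verifier tosses any coins, there is no single ``true transcript'' for Merlin to write down: a fixed $g_2$ cannot be correct for all values of $r_1$, so even completeness fails. The only repairs are to index the advice by all possible challenge prefixes (blowing $|\mu|$ up to roughly $|\mathbb{F}|^{t}$, far beyond $2^{o(n)}$, let alone $o(n)$) or to abandon multi-round interaction. The actual construction does the latter: it is a \emph{one-round} protocol in which the clause indices are arranged in a $T\times(m/T)$ grid and Merlin sends the single degree-$O(T)$ polynomial $P(Z)=\sum_{j}\tilde a_j(Z)\tilde b_j(Z)$ in full, giving the $O(T\log|\mathbb{F}|)$-vs-$O((m/T)\log|\mathbb{F}|)$ tradeoff (this is exactly the shape of the protocol the present paper builds in its Theorem~4.2), with algebraic-geometry codes used in \cite{Abboud2017} to reach the stated $1-1/2^{n^{1-o(1)}}$ soundness within an $o(n)$ budget.

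Two secondary problems: your own parameter instantiation is internally inconsistent --- with $t\approx n/\log n$ and $\log|\mathbb{F}|=n/\mathrm{polylog}(n)$ you get $|\mu|=t\cdot O(\log|\mathbb{F}|)=n^2/\mathrm{polylog}(n)=\omega(n)$, contradicting the $o(n)$ you claim in the same sentence, and $|H|^t=2^{\Theta(n/\log n)}$ wildly overshoots the requirement $|H|^t\ge m=O(n)$ (which only needs $t=\Theta(\log n)$ when $|H|=O(1)$). Likewise Bob's table of size $|\mathbb{F}|^t$ is not obviously $2^{o(n)}$ under your parameters. These could be fixed by re-tuning, but the non-interactivity obstacle in the previous paragraph is structural and requires replacing the sum-check with the one-shot polynomial protocol.
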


\subsection{Pure sub-linear vs. Pseudo sub-linear}
In \cite{Gao2020} the authors distinguished the two concepts which are pure sub-linear tractable and pseudo sub-linear tractable. Informally, \textit{pure} sub-linear tractable means the problem can be directly solved by a sub-linear time algorithm, where \textit{pseudo} sub-linear allows a polynomial time preprocessing. For example, finding an element in an unsorted array is pseudo sub-linear tractable, since we can pay $O(n\log{n})$ time on sorting the array as processing, and then find the element using binary search in $O(\log{n})$ time. In this sense, all the sub-linear time algorithms listed in Section \ref{subsec:sub-linear-algs} are pure sub-linear time algorithms. For pseudo sub-linear time algorithms, the authors in \cite{Fan2013} defined the $\sqcap$-tractability class, which is the class of boolean query problems that can be solved in parallel polylogarithmic time after a polynomial time preprocessing. As far as we know, the two works \cite{Fan2013,Gao2020} are the only ones that distinguished the concept of pure and pseudo sub-linear time.

Transferring the concept onto approximation algorithms, it is obvious that pure sub-linear time approximability is more strict than pseudo sub-linear time approximability. On the other hand, pseudo sub-linear time inapproximability is more strict than pure sub-linear time inapproximability. In fact, there are several existing results based on SETH and distributed PCP, showing that some problem is pseudo sub-linear time inapproximable. For example, in \cite{Abboud2017} the authors proved a result for Max Inner Product, which says that there exists $\rho<1$ such that no algorithm can preprocess the input in polynomial time then answer a query in $O(n^{1-\epsilon})$ time within a approximation factor of $\rho$. To this extent, the distributed PCP framework seems to be strong enough for proving pure sub-linear time inapproximability.
However, the distributed PCP framework can fail in certain cases, where the sub-linear PCP theorem proposed in this paper suffices. These results will be discussed in Section \ref{sec:app}. In summary, it can be said that the two kinds of PCP system for approximation in P both have their strength and limitation.

\section{Preliminaries}

\subsection{Sparsification Lemma}

The sparsification lemma is important in the following discussions. Roughly speaking, the lemma states that any $k$-CNF can be expressed by another $k$-CNF whose number of clauses is linear in the number of variables.

\begin{lema}[\cite{Impagliazzo2001}]\label{lema:sparsification}
	For all $\epsilon>0$, a $k$-CNF $F$ can be expressed as the disjunction of at most $2^{\epsilon n}$ $k$-CNF $F_i$ such that $F_i$ contains each variable in at most $c(k,\epsilon)$ clauses for some function $c$, Moreover, this reduction takes at most $poly(n)2^{\epsilon n}$ time.
\end{lema}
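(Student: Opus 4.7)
The plan is to build the disjunction via a recursive branching algorithm whose leaves are the required $F_i$. Call a set $H$ of at most $k-1$ literals a \emph{heart} of $F$ if $H$ is contained in at least $t(|H|)$ clauses of $F$, where $t(1)>t(2)>\cdots>t(k-1)$ are thresholds to be tuned in terms of $k$ and $\epsilon$. If $F$ has no heart at any level, then every variable---viewed as a heart of size one---appears in at most $t(1)$ clauses, so $F$ meets the sparsity requirement with $c(k,\epsilon):=t(1)$ and we emit it as a leaf. Otherwise, pick a smallest heart $H$ and split into the two cases that together cover every assignment to $H$: case (i), some literal of $H$ is true, in which case every clause containing $H$ is satisfied and we replace all such clauses by the single clause $H$; and case (ii), every literal of $H$ is false, in which case we add the unit clauses $\bar\ell$ for $\ell\in H$ and delete $H$ from every clause in which it appears. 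Recurse on the two resulting $k$-CNFs.

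Correctness follows by induction: the union of (i) and (ii) is tautologically true, so $F$ is logically equivalent to the disjunction of the two recursive outputs, and each step preserves $k$-CNFness, since (i) inserts a clause of size $|H|\le k-1$ and (ii) only shrinks clauses. The stopping criterion guarantees that each output leaf already satisfies the sparsity bound with constant $c(k,\epsilon)$.

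The crux is bounding the number of leaves by $2^{\epsilon n}$. Here I would introduce a potential $\Phi(F)$ capturing the ``entropy'' remaining after the literals frozen by case-(ii) branches are accounted for, together with a contribution counting clauses weighted by how far they are from being already absorbed by some small heart. The goal is to show that branching on a heart of size $j$ forces a drop of at least $\log_2 t(j)$ units of $\Phi$ on the (i)-side (since case (i) collapses $t(j)$ distinct clauses into one) and uses only $j$ bits of assignment on the (ii)-side. Choosing $t(j)$ to grow rapidly enough in $k/\epsilon$---roughly a tower-type function---makes the amortized cost at most $\epsilon$ bits per fixed variable, so the branching tree has at most $2^{\epsilon n}$ leaves. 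The running-time bound is then immediate, because each internal node does $\mathrm{poly}(n)$ work (finding a smallest heart is a polynomial search over subsets of size $\le k-1$, where $k$ is constant).

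The main obstacle is precisely calibrating the thresholds $t(j)$ and the potential so that case-(i) branches, which consolidate many long clauses into a single short one, are charged correctly against the combinatorial savings they produce, while case-(ii) branches, which only shorten clauses but fix literals, pay their own share. This amortization---not the branching scheme itself---is the technical heart of the argument, and it is where I would spend most of the effort, following the accounting template of Impagliazzo--Paturi--Zane.
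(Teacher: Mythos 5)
The paper does not prove this lemma; it is quoted verbatim from Impagliazzo--Paturi--Zane and used as a black box, so there is no in-paper proof to compare against. Judged against the actual IPZ argument, your outline has the right skeleton: the sunflower/heart branching, the two complementary cases (some literal of $H$ true versus all false), the logical equivalence of $F$ with the disjunction of the two branches, and the termination condition ``no heart at any level'' yielding the sparsity bound $c(k,\epsilon)$. Two small points: a heart of size one is a \emph{literal}, not a variable, so the leaf condition gives each variable appearing in at most $2t(1)$ clauses rather than $t(1)$ (harmless, absorbed into $c$); and you should note that case (ii) can create clauses of width $<k$ but never width $>k$, and that empty clauses (making the branch unsatisfiable) are handled by discarding that branch.

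The genuine gap is that the entire quantitative content of the lemma --- the bound of $2^{\epsilon n}$ on the number of leaves --- is left as a plan rather than an argument. You write that you ``would introduce a potential $\Phi$'' and that ``choosing $t(j)$ to grow rapidly enough \ldots makes the amortized cost at most $\epsilon$ bits per fixed variable,'' but no potential is defined, no thresholds are fixed, and no inequality is verified. This is not a routine verification: the difficulty is that case-(i) branches do not fix any variables at all (they only merge clauses), so a naive ``bits per variable'' accounting does not terminate; the IPZ proof has to argue that case-(i) branches can occur only boundedly many times per heart size before the structure forces a case-(ii) branch, and the thresholds $t(1)>\cdots>t(k-1)$ are calibrated exactly to make that interplay work (the resulting $c(k,\epsilon)$ is roughly $(k/\epsilon)^{O(k)}$). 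Without carrying out that calibration, what you have is a correct description of the algorithm together with a statement of what remains to be proved --- which is essentially the lemma itself. To complete the proof you would need to (a) fix the thresholds explicitly, (b) define the measure on nodes of the branching tree, and (c) prove the recurrence bounding the number of leaves, following Sections 3--4 of the IPZ paper.
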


Together with SETH (Conjecture \ref{conj:seth}) we have the following corollary.

\begin{coro}\label{coro:spar-seth}
	If there exists $\epsilon>0$ such that for all $k\ge 3$, $k$-SAT on $n$ variables and $c_{k,\epsilon}n$  clauses can be solved in $O(2^{(1-\epsilon)n})$ time, then SETH is false.
\end{coro}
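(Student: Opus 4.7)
The plan is to prove the contrapositive in the obvious way: assume the hypothesis and use Lemma \ref{lema:sparsification} to turn any general $k$-CNF into a manageable number of sparse $k$-CNFs, each of which can be decided quickly by the assumed algorithm. Concretely, let $\epsilon > 0$ be the constant produced by the hypothesis, and let $\epsilon' = \epsilon/2$ (any constant strictly less than $\epsilon$ works, so choosing $\epsilon/2$ is a convenient choice). Fix an arbitrary $k \ge 3$ and an arbitrary $k$-CNF $F$ on $n$ variables.

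First, I apply the Sparsification Lemma with parameter $\epsilon'$ to $F$. This produces in time $\mathrm{poly}(n)\cdot 2^{\epsilon' n}$ at most $2^{\epsilon' n}$ sparse $k$-CNFs $F_1,\dots,F_N$, each with at most $c(k,\epsilon')\, n$ clauses, such that $F$ is satisfiable iff at least one $F_i$ is. The constant $c_{k,\epsilon}$ in the statement of the corollary should be read as any function of $k$ and $\epsilon$ dominating $c(k,\epsilon/2)$, so that the assumed sparse-$k$-SAT algorithm applies directly to every $F_i$.

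Next, I run the assumed $O(2^{(1-\epsilon)n})$-time algorithm on each $F_i$ and return YES iff any run accepts. The correctness is immediate from the disjunction property of the sparsification. The total running time is bounded by the sparsification overhead plus the cost of $N \le 2^{\epsilon' n}$ calls, i.e.
$$\mathrm{poly}(n)\cdot 2^{\epsilon' n} + 2^{\epsilon' n}\cdot O\!\left(2^{(1-\epsilon)n}\right) = O\!\left(2^{(1-\epsilon+\epsilon')n}\right) = O\!\left(2^{(1-\epsilon/2)n}\right).$$
Since $k$ was arbitrary and $\delta := \epsilon/2 > 0$ does not depend on $k$, we have solved $k$-SAT for every $k \ge 3$ in $O(2^{(1-\delta)n})$ time, which directly contradicts SETH (Conjecture \ref{conj:seth}).

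There is no real obstacle here; the argument is a textbook use of sparsification, and the only point that deserves care is the bookkeeping of constants, namely ensuring that the parameter used inside the sparsification lemma is strictly smaller than the savings $\epsilon$ promised by the hypothesis (so that the $2^{\epsilon' n}$ blow-up in the number of sub-formulas is absorbed) and that $c_{k,\epsilon}$ is interpreted as an upper bound on $c(k,\epsilon')$ rather than as a fixed sparsification constant tied to the same $\epsilon$. With that reading the numbers match and the corollary follows.
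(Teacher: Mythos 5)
Your proof is correct and is exactly the standard sparsification argument the paper intends: the paper states the corollary without an explicit proof, presenting it as a direct consequence of Lemma \ref{lema:sparsification} together with Conjecture \ref{conj:seth}, and your argument fills in precisely that reasoning. Your bookkeeping --- running the sparsification at parameter $\epsilon'=\epsilon/2$ and reading $c_{k,\epsilon}$ as a bound dominating $c(k,\epsilon/2)$ --- is the right way to make the constants work, since sparsifying at $\epsilon$ itself would let the $2^{\epsilon n}$ blow-up cancel the assumed savings.
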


\subsection{Polynomial encoding}
One of the most important building blocks of the PCP theorem is the polynomial codes. Here we do not intend to describe the details, but only cite the following lemma, which is central to analyze the soundness of PCP theorems.

\begin{lema}[Schwartz-Zippel Lemma]\label{lema:schwartz-zipple}
	Let $F_1(x_1,x_2,\cdots,x_n),F_2(x_1,x_2,\cdots,x_n)$ be two multivariate polynomials of maximum degree $d$ defined over a field $\mathbb{F}$. Fix any finite set $S\subset \mathbb{F}$, and let $r_1,r_2,\cdots,r_n$ be chosen independently and uniformly at random in $S$. Then 
	
	$$\Pr\left[F_1(r_1,r_2,\cdots,r_n)=F_2(r_1,r_2,\cdots,r_n) \mid F_1\ne F_2\right]\le \frac{d}{|S|} $$
\end{lema}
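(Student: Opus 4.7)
The plan is to set $P := F_1 - F_2$, a nonzero polynomial of total degree at most $d$, and to show that $\Pr[P(r_1,\ldots,r_n) = 0] \le d/|S|$; the stated conditional probability bound follows immediately by substituting this $P$. I would proceed by induction on the number of variables $n$.

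For the base case $n=1$, I would invoke the standard fact that a nonzero univariate polynomial of degree at most $d$ over a field has at most $d$ roots, so a uniformly random $r_1 \in S$ is a root with probability at most $d/|S|$.

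For the inductive step, expand $P$ as a polynomial in $x_1$ with coefficients in $\mathbb{F}[x_2,\ldots,x_n]$,
$$P(x_1,\ldots,x_n) \;=\; \sum_{i=0}^{k} Q_i(x_2,\ldots,x_n)\,x_1^{i},$$
where $k$ is the largest index with $Q_k \not\equiv 0$. Because the total degree of $P$ is at most $d$, the leading coefficient $Q_k$ is a nonzero polynomial in $n-1$ variables of total degree at most $d-k$. I would then decompose the event $\{P(r_1,\ldots,r_n)=0\}$ into $E_1 = \{Q_k(r_2,\ldots,r_n)=0\}$ and $E_2 = \{Q_k(r_2,\ldots,r_n) \neq 0 \text{ and } P(r_1,\ldots,r_n)=0\}$. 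The inductive hypothesis applied to $Q_k$ gives $\Pr[E_1] \le (d-k)/|S|$. Conditioning on any fixing of $r_2,\ldots,r_n$ on which $Q_k \neq 0$, the specialization $P(x_1,r_2,\ldots,r_n)$ is a nonzero univariate polynomial in $x_1$ of degree exactly $k$, which has at most $k$ roots; thus $\Pr[E_2 \mid r_2,\ldots,r_n] \le k/|S|$, and so $\Pr[E_2] \le k/|S|$. A union bound finally yields $\Pr[P=0] \le (d-k)/|S| + k/|S| = d/|S|$, closing the induction.

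The only subtlety is interpreting ``maximum degree $d$'' as total degree, so that the leading coefficient $Q_k$ with respect to $x_1$ has total degree bounded by $d-k$ and the telescoping $(d-k)+k = d$ goes through; without this convention the induction does not close. Beyond this bookkeeping the proof is a straightforward union bound combined with the one-variable root count, so I do not anticipate any substantive obstacle.
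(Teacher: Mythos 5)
Your proof is correct: it is the standard induction on the number of variables, splitting on whether the leading coefficient (with respect to $x_1$) vanishes, bounding one event by the inductive hypothesis and the other by the univariate root count, and finishing with a union bound. The paper itself offers no proof of this lemma --- it is explicitly cited as a known result from the literature --- so there is nothing to compare against beyond confirming that your argument is the canonical one. Your closing remark is worth keeping: the bound $d/|S|$ is only valid if ``maximum degree'' is read as \emph{total} degree (under the per-variable reading the correct bound would degrade to roughly $nd/|S|$), and the paper's phrasing is ambiguous on this point; fortunately the paper only ever applies the lemma to univariate polynomials, where the two readings coincide.
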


\section{The PCP theorem for sub-linear inapproximability}\label{sec:sublinear-pcp}

\subsection{Existential $k$-Satisfaction Problem}
We first propose the Existential $k$-Satisfaction (Ext-$k$-SAT) problem, which is the corner stone of the discussion in this section.

\begin{defn}\label{defn:ESAT-prpblem}
	Given a set $A=\{\alpha_1,\cdots, \alpha_n\}$ of assignments, where $|\alpha_i|=d$ and $d=\Omega(\log{n})$, $d=o(n)$, and a $k$-CNF $\varphi$ with $d$ variables and $m=O(d)$ clauses, decide whether there exists an assignment $\alpha^*\in S$ such that $\varphi$ is satisfied by $\alpha^*$. 
\end{defn}

The following Theorem \ref{thrm:ESAT-no-linear} gives some insight of the hardness of Ext-$k$-SAT.
	
\begin{thrm}\label{thrm:ESAT-no-linear}
	Assuming SETH, for $\forall \epsilon >0$, there exists $k\ge 3$ such that Ext-$k$-SAT problem can not be solved in $O(n^{1-\epsilon})$ time.
\end{thrm}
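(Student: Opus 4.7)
The plan is to prove the contrapositive: if Ext-$k$-SAT could be solved in $O(n^{1-\epsilon})$ time, then $k$-SAT on $N$ variables could be solved in $O(2^{(1-\epsilon')N})$ time for some $\epsilon'>0$, contradicting SETH. Given $\epsilon>0$, I would pick a small $\epsilon''<\epsilon$ in advance (to absorb sparsification overhead) and set $\epsilon' := \epsilon-\epsilon''$; SETH (Conjecture \ref{conj:seth}) then supplies the value of $k\ge 3$ that will witness the lower bound for Ext-$k$-SAT.

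The reduction is the most direct one. Given a $k$-SAT formula $\varphi$ on $N$ variables, first apply the Sparsification Lemma (Lemma \ref{lema:sparsification}) / Corollary \ref{coro:spar-seth} with parameter $\epsilon''$ to reduce, at the multiplicative cost of $2^{\epsilon''N}$ sparse sub-instances, to the case $m = c_{k,\epsilon''}\cdot N$. For each sparse sub-instance I build the Ext-$k$-SAT instance $(A,\varphi)$ with $d := N$, candidate pool $A := \{0,1\}^N$, and $\varphi$ itself as the CNF; then $\varphi$ is satisfiable iff some $\alpha\in A$ satisfies $\varphi$, which is precisely the Ext-$k$-SAT question. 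The parameter constraints of Definition \ref{defn:ESAT-prpblem} are met: $n = 2^N$, whence $d = \log_2 n = \Omega(\log n)$ and $d = o(n)$, and $m = O(N) = O(d)$.

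Running the hypothetical $O(n^{1-\epsilon})$ Ext-$k$-SAT solver on each of the $2^{\epsilon''N}$ sub-instances yields a total cost of $2^{\epsilon''N}\cdot O(2^{(1-\epsilon)N}) = O(2^{(1-\epsilon+\epsilon'')N}) = O(2^{(1-\epsilon')N})$, plus the $\mathrm{poly}(N)\cdot 2^{\epsilon''N}$ cost of sparsification itself, which is absorbed into the same bound. This contradicts the SETH lower bound on $k$-SAT for the $k$ that was pinned down above, completing the argument.

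The step I expect to be the main obstacle is representational rather than combinatorial. Because $n^{1-\epsilon}$ is strictly sub-linear in $n$ (and the explicit input size of an Ext-$k$-SAT instance is $\Theta(nd)$), writing out the pool $A = \{0,1\}^N$ bit by bit as part of the reduction would itself consume $\Theta(N\cdot 2^N)$ time and wipe out the intended contradiction. The fix is to use the standard sub-linear convention: $A$ is supplied via succinct / oracle access, with the $i$-th candidate recoverable in $O(d) = O(\log n)$ time from its index (here, simply by binary expansion of $i$), so that the reduction itself costs only $\mathrm{poly}(N)$ bookkeeping per sub-instance and the whole time budget is dominated by the Ext-$k$-SAT solver calls. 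One then has to check that this implicit-access convention is indeed the one under which later Ext-reductions in Section \ref{sec:app} will feed instances into the Ext-$k$-SAT protocol, so that the lower bound proved here is the one actually consumed downstream.
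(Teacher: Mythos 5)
Your proposal follows essentially the same route as the paper's own proof: take the $k$ supplied by SETH, let the candidate pool $A$ be all $2^{N}$ assignments (so $n=2^{N}$, $d=N$), and turn an $O(n^{1-\epsilon})$ Ext-$k$-SAT solver into an $O(2^{(1-\epsilon')N})$ SAT algorithm. The two refinements you add --- invoking the Sparsification Lemma so that $m=O(d)$ as Definition \ref{defn:ESAT-prpblem} requires, and noting that $A$ must be given implicitly (index-to-assignment in $O(d)$ time) since writing it out would already exceed the time budget --- are points the paper's proof silently glosses over, and they strengthen rather than change the argument.
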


\begin{proof}
	Recalling SETH, for $\forall \epsilon >0$, there exists $k\ge 3$ such that $k$-SAT can not be solved in $O(2^{(1-\epsilon)n})$ time. Thus, for arbitrary given $\epsilon >0$, let $k$ be the value stated in SETH, and consider the Ext-$k$-SAT problem. If Ext-$k$-SAT can be solved in $O(n^{1-\epsilon})$ time, then the SAT problem with $n$ variables can be solved in $O(2^{(1-\epsilon)n})$ time by the following algorithm. Let $S$ be the set of all possible assignments, then $|S|=2^n$. Now just invoke the $O(n^{1-\epsilon})$ time algorithm for Ext-$k$-SAT, then SAT problem is solved in $O(|S|^{1-\epsilon})=O(2^{(1-\epsilon)n})$ time, contradicting with SETH.
\end{proof}

The Ext-$k$-SAT problem can be regarded as the hardness core of SAT if using linear enumeration.
Next we proceed to design an MA-protocol for Ext-$k$-SAT problem, which will be given in Section \ref{subsec:ext-k-sat-protocol}. We first consider the set containment problem.

\subsection{The protocol for Set Containment problem}

\begin{defn}
	Given a set $S$ of elements over a universe $\mathbb{U}$ and a query element $e\in \mathbb{U}$, decide whether $e\in S$.
\end{defn}

\begin{thrm}\label{thrm:set-ctnmt-protocol}
	For $T\le |\mathbb{U}|$, there exists a protocol for Set Containment problem, where Alice holds the set $S$ and Bob holds the query element $e$. The protocol works as follows:
	\begin{itemize}
		\item Bob sends Alice $O(|\mathbb{U}|\log{|\mathbb{U}|}/T+T)$ bits;
		\item Alice toss $O(\log{|\mathbb{U}|})$ coins, and returns Accept or Reject.
	\end{itemize}

	If the input set $S$ contains the query element $e$, then Alice always returns Accept; otherwise ($e\notin S$), Alice will reject with a probability at least $\frac{1}{2}$. The probability is over the output of random coin tossing.
\end{thrm}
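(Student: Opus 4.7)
The plan is to design a one-round randomized protocol in which Bob encodes $e$ redundantly and Alice performs a polynomial spot-check. The parameter $T$ will govern a trade-off between the length of Bob's message and the granularity at which Alice verifies.

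First, I would view the universe $\mathbb{U}$ as a Cartesian structure, partitioning it into $T$ chunks of size $|\mathbb{U}|/T$ (e.g., by writing $e$ in mixed-radix form or embedding $\mathbb{U}$ into $\mathbb{F}^m$ for a suitable field $\mathbb{F}$ with $|\mathbb{F}|^m \ge |\mathbb{U}|$). Correspondingly, Alice's set $S$ is regarded as a $T$-indexed family of sub-problems, and Bob's element $e$ as a pair $(i^{*}, j^{*})$ where $i^{*}$ selects a chunk and $j^{*}$ locates $e$ within the chunk.

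Bob's message would then split into two parts. The first part, of size $O(T)$, is a redundant encoding of $i^{*}$ that lets Alice align with the correct chunk without ambiguity. The second part, of size $O(|\mathbb{U}|\log|\mathbb{U}|/T)$, is a collection of polynomial fingerprints of the local identifier $j^{*}$, one per chunk, so that Alice can probe whichever chunk her randomness picks out. Alice then tosses $O(\log|\mathbb{U}|)$ coins to choose a random evaluation point over a field of size $\operatorname{poly}(|\mathbb{U}|)$, evaluates her own polynomial encoding of the corresponding portion of $S$, and compares it with the fingerprint Bob supplied. Completeness is immediate, since if $e \in S$ then Bob's honestly computed fingerprint matches Alice's evaluation exactly.

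For soundness I would invoke the Schwartz--Zippel lemma (Lemma \ref{lema:schwartz-zipple}): if $e \notin S$, then for every chunk Alice might examine, the two low-degree polynomials being compared are not identical, and a single random evaluation reveals the discrepancy with probability at least $1 - d/|\mathbb{F}|$. Choosing $|\mathbb{F}|$ to be a sufficiently large polynomial in $|\mathbb{U}|$ drives this above $1/2$ while keeping each field element representable in $O(\log|\mathbb{U}|)$ bits, consistent with both the message-length budget and Alice's random-coin budget.

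The main obstacle I anticipate is calibrating three parameters simultaneously: the chunk count $T$, the field size $|\mathbb{F}|$, and the degree of the polynomial used for fingerprinting. All three must combine so that Bob's message fits inside $O(|\mathbb{U}|\log|\mathbb{U}|/T + T)$ bits, Alice's randomness fits in $O(\log|\mathbb{U}|)$ coins, and the Schwartz--Zippel soundness stays $\geq 1/2$. The delicate step is verifying that the sum-of-two-terms shape of the bit bound is genuinely achievable for every feasible $T \le |\mathbb{U}|$, and not merely near the optimum $T \approx \sqrt{|\mathbb{U}|\log|\mathbb{U}|}$ where the two terms balance.
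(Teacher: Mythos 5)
Your overall architecture matches the paper's: split $\mathbb{U}$ into $T$ chunks, have Bob send $O(T)$ bits to locate the chunk of $e$ plus the $O(|\mathbb{U}|/T)$ coefficients (each of $O(\log|\mathbb{U}|)$ bits) of a low-degree encoding of $e$ within that chunk, and have Alice do a single random evaluation over a field of size polynomial in $|\mathbb{U}|$, with soundness from Schwartz--Zippel. The two-term bound is not delicate at all --- each term bounds one part of the message separately for every $T$ --- and the paper simply takes a prime $q$ with $4|\mathbb{U}|\le q\le 8|\mathbb{U}|$ so that the relevant degree $2|\mathbb{U}|/T$ is at most $q/2$.

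However, there is a genuine gap in your verification step. You have Alice ``evaluate her own polynomial encoding of the corresponding portion of $S$ and compare it with the fingerprint Bob supplied,'' claiming completeness because the two ``match exactly'' when $e\in S$. That is an \emph{equality} test between the low-degree extension $\psi_{s,t}$ of the indicator of $S$'s chunk and the low-degree extension $\psi_{e,t}$ of the indicator of the single point $e$. These two polynomials are \emph{not} equal when $e\in S$ unless $S$'s chunk happens to contain only $e$; whenever the chunk contains any other element of $S$, your test rejects a YES instance, so completeness fails (and it fails with probability close to $1$, since the two distinct low-degree polynomials disagree on most of the field). Fingerprinting certifies string equality, but set containment is an implication, not an equality. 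The paper's fix is to test the \emph{product} identity
\begin{equation*}
\psi_{s,t}(x)\cdot\psi_{e,t}(x)=\psi_{e,t}(x),
\end{equation*}
which holds identically (hence for every evaluation point) exactly when the point where $\psi_{e,t}$ equals $1$ is also a point where $\psi_{s,t}$ equals $1$, i.e.\ when $e\in S$; and when $e\notin S$ the left side is the extension of the all-zero function while the right side is not, so Schwartz--Zippel applied to the degree-$2|\mathbb{U}|/T$ difference gives rejection probability at least $\tfrac12$. With this one change your argument goes through and coincides with the paper's proof.
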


\begin{proof}
	Let $q$ be a prime number such that $4|\mathbb{U}|\le q\le 8|\mathbb{U}|$, and let $\mathbb{F}_q$ be the prime field with size $q$. For a efficient communication protocol we consider the following arithmetization. Assume without loss of generality that $T$ divides $|\mathbb{U}|$, and represent the set that Alice holds as $T$ functions $\psi_{s,t}:[\frac{n}{T}]\rightarrow \{0,1\}$ as follows: $\psi_{s,t}(x)\triangleq 1$ if and only if the element corresponding to $(x,t)$ is in Alice's set. The query element can be regarded as a set with a single element, and thus can be represented as $T$ functions $\psi_{e,t}$ in the same way. 
	
	We first consider the situation that $T=1$, where the set and the element are both represented by one single function, which are $\psi_s(x)$ and $\psi_e(x)$ respectively. Consider the product $\Psi(x)=\psi_s(x)\cdot \psi_e(x)$. If $S$ contains $e$, then it can be verified that $\Psi(x)=1$ only when $x=e$, otherwise $\Psi(x)=0$. In this situation $\Psi(x)$  exactly equals to $\psi_e(x)$. If $S$ does not contain $e$, then $\Psi(x)\equiv 0$. 
	
	Now consider arbitrary $T$ but assume without generality that $T$ divides $|\mathbb{U}|$. The protocol works as follows.
	
	\begin{enumerate}
		\item Bob sends Alice the following message: $0_1,0_2,\cdots, \psi_{e,t},\cdots, 0_T$.
		\item Alice draws $x \in \mathbb{F}_q$ uniformly at random.
		\item Alice reads through message, and when encounters $\psi_{e,t}$ she conducts the following examination:
		\begin{equation}\label{eqtn:sublinear-pcp-one}
			\psi_{s,t}(x)\cdot\psi_{e,t}(x)=\psi_{e,t}(x)
		\end{equation}
		and accepts if and only if the examination succeeds.
	\end{enumerate}

	We now explain about Bob's message. The message that Bob sends to Alice represents the query element $e$. Since $\psi_{e,t}(x)$ equals 1 only when $(x,t)$ corresponds to $e$, the parts irrelevant to $e$ can be represented by one-bit $0$. Now consider $\psi_{e,t}(x)$ that corresponds to $e$. The degree of it is at most $\frac{|\mathbb{U}|}{T}-1$, and thus it can be uniquely identified by $\frac{|\mathbb{U}|}{T}-1$ coefficients in $\mathbb{F}_q$. Since each coefficient requires $\log_2{|\mathbb{F}_q|}=\log_2{|\mathbb{U}|}+O(1)$ bits, the length of the message that Merlin sends is $O(|\mathbb{U}|\log{|\mathbb{U}|}/T+T)$.
	
	Finally we analyze the completeness and soundness.
	
	\textbf{Completeness} If Alice's set contains the element, then Alice always accepts.
	
	\textbf{Soundness} If the element is not contained in Alice's set, then $\psi_{s,t}(x)\cdot\psi_{e,t}(x)$ differs with $\psi_{e,t}(x)$. By the Schwartz-Zippel Lemma (Lemma \ref{lema:schwartz-zipple}), since the degree of the two polynomials are no larger than $2\frac{|\mathbb{U}|}{T}\le q/2$, the two polynomials must differ on at least half of $\mathbb{F}_q$. And thus (\ref{eqtn:sublinear-pcp-one}) is false with probability at least $\frac{1}{2}$.
\end{proof}

In the above Theorem \ref{thrm:set-ctnmt-protocol} we use $|\mathbb{U}|$ as the complexity parameter, and now we discuss the range of $|\mathbb{U}|$ represented by $|S|$. First we must have $|\mathbb{U}|=\Omega(S)$, otherwise there must be repeated elements in $S$ by the Pigeonhole Principle. Then we add a restriction that $|\mathbb{U}|=o(2^{|S|})$, since if $|\mathbb{U}|=\Theta(2^{|S|})$ then the length of each element in $S$ is comparable to the number of elements in $S$, which is unusual. Now denoting $d=\log{|\mathbb{U}|}$ which is the length of the binary representation of each element in $\mathbb{U}$, and $n=|S|$, we have $d=\Omega(\log{n})$ and $d=o(n)$. This is also the parameter range given in Definition \ref{defn:ESAT-prpblem}.

Back to Theorem \ref{thrm:set-ctnmt-protocol}, we show that the $|\mathbb{U}|$ term can be reduced to $O(n)$. In the above proof we try to encode the entire universe $\mathbb{U}$, but it is not necessary. It suffices to encode only the elements in $S$. Thus we can choose a smaller field $\mathbb{F}_q$ of size $O(n)$, and then the complexity of the protocol can be reduced.

\begin{coro}\label{coro:set-ctnmt-protocol}
	For $T\le n$ where $n=|S|$, there exists a protocol for Set Containment problem, where Alice holds the set $S$ and Bob holds the query element $e$. The protocol works as follows:
	\begin{itemize}
		\item Bob sends Alice $O(n\log{n}/T+T)$ bits;
		\item Alice tosses $O(\log{n})$ coins, and returns Accept or Reject.
	\end{itemize}
	
	If the input set $S$ contains the query element $e$, then Alice always returns Accept; otherwise (if $e\notin S$), Alice will reject with a probability at least $\frac{1}{2}$. The probability is over the output of random coin tossing.
\end{coro}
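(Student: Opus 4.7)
The plan is to mimic the proof of Theorem \ref{thrm:set-ctnmt-protocol} nearly verbatim, but to replace the field $\mathbb{F}_q$ of size $\Theta(|\mathbb{U}|)$ by a smaller prime field with $4n\le q\le 8n$, and to re-parameterize the arithmetization so that the polynomials appearing in the protocol have degree only $\Theta(n/T)$ instead of $\Theta(|\mathbb{U}|/T)$. Concretely, I fix a prime $q\in[4n,8n]$ (which exists by Bertrand's postulate), agree with Bob on a fixed partition of $\mathbb{U}$ into blocks of size $\lceil n/T\rceil$ indexed by $t$, and define $\psi_{s,t}:[n/T]\to\{0,1\}$ and $\psi_{e,t}:[n/T]\to\{0,1\}$ exactly as in Theorem \ref{thrm:set-ctnmt-protocol}, viewed as polynomials over $\mathbb{F}_q$ of degree at most $n/T-1$.

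Bob's message will have the same shape as in the theorem: for the unique block $t^*$ that contains $e$ he transmits the polynomial $\psi_{e,t^*}$ encoded by its $n/T$ coefficients in $\mathbb{F}_q$ (costing $O(\log n)$ bits each), and for the remaining blocks he transmits a single-bit marker denoting the zero polynomial, for a total of $O((n/T)\log n+T)$ bits. Alice then samples $x\in\mathbb{F}_q$ uniformly at random (using $O(\log n)$ random coins) and accepts iff $\psi_{s,t^*}(x)\cdot\psi_{e,t^*}(x)=\psi_{e,t^*}(x)$. Completeness is immediate because, for $e\in S$, the identity holds pointwise on $[n/T]$ and hence as a polynomial identity in $\mathbb{F}_q[x]$. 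For soundness, if $e\notin S$ then $\psi_{s,t^*}\cdot\psi_{e,t^*}-\psi_{e,t^*}$ has degree at most $2(n/T-1)<q/2$ and is not identically zero (it takes value $-1$ at the domain point corresponding to $e$), so Lemma \ref{lema:schwartz-zipple} forces it to vanish on at most half of $\mathbb{F}_q$, yielding rejection probability at least $\tfrac12$.

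The main obstacle I anticipate is that when $|\mathbb{U}|\gg n$ the uniform partition produces $\lceil|\mathbb{U}|T/n\rceil$ blocks rather than $T$, so the naive marker stream is not automatically $O(T)$ bits long. My fix would be to replace the one-marker-per-block stream by a single $\log\lceil|\mathbb{U}|T/n\rceil=O(d)$-bit compressed pointer to the unique nontrivial block $t^*$, followed by $\psi_{e,t^*}$ itself; the parameter regime $d=\log|\mathbb{U}|=o(n)$ prescribed by Definition \ref{defn:ESAT-prpblem} should let this pointer be absorbed into the $O((n/T)\log n+T)$ term across the range $1\le T\le n$ of interest, but verifying this bookkeeping carefully (and, if need be, restricting $T$ slightly so that $d$ is dominated by either $(n/T)\log n$ or $T$) is the step I would devote the most attention to.
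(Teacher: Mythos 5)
Your proposal follows the theorem's skeleton but instantiates the ``reduce $|\mathbb{U}|$ to $O(n)$'' step differently from the paper: the paper's (one-sentence) argument for Corollary \ref{coro:set-ctnmt-protocol} is to stop encoding the whole universe and encode only the $n$ elements of $S$ over a field of size $O(n)$, so the total interpolation domain has size $n$ and no bookkeeping over $\mathbb{U}$ appears; you instead keep the universe-indexed encoding, shrink the block size to $\lceil n/T\rceil$ (hence $\lceil |\mathbb{U}|T/n\rceil$ blocks), and let Bob send a pointer to the unique nontrivial block. The difficulty you flag at the end is a genuine gap, not mere bookkeeping: within your scheme Alice must learn which block polynomial $\psi_{s,t^*}$ to evaluate, and since Bob is deterministic and holds only $e$, that pointer costs $\log\lceil |\mathbb{U}|T/n\rceil \ge d-\log n$ bits. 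For $d=o(n)$ but $d\gg\sqrt{n\log n}$ (e.g.\ $d=n/\sqrt{\log n}$) and intermediate $T$ (e.g.\ $T=\sqrt{n}$), this exceeds $O(n\log n/T+T)$, so your argument does not deliver the corollary over its full stated range of $T$ and $d$, and ``restricting $T$'' yields a strictly weaker statement than the one claimed. In fairness, the paper's own justification never explains how Bob, who does not know $S$, writes his message in an $S$-indexed encoding, and a simple fooling argument shows any one-way deterministic message from Bob must distinguish all elements of $\mathbb{U}$, i.e.\ carry $\ge d$ bits; so the obstruction you hit is real and is glossed over by the paper rather than being peculiar to your route.

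A smaller but concrete flaw: your completeness step asserts that agreement of $\psi_{s,t^*}\cdot\psi_{e,t^*}$ with $\psi_{e,t^*}$ pointwise on $[n/T]$ implies equality as polynomials in $\mathbb{F}_q[x]$. That inference is invalid: the product has degree up to $2(n/T-1)$, so $n/T$ points of agreement do not force identity (indeed the polynomial identity $\psi_{s,t^*}\psi_{e,t^*}=\psi_{e,t^*}$ would require $\psi_{s,t^*}\equiv 1$), and with $x$ drawn from all of $\mathbb{F}_q$ perfect completeness does not follow from pointwise agreement on the domain. The proof of Theorem \ref{thrm:set-ctnmt-protocol} makes essentially the same leap, and the corollary may inherit the theorem's protocol as given, so this does not separate your write-up from the paper's; but if you re-derive completeness from scratch, you need either to evaluate only on the interpolation domain (which destroys the soundness amplification) or to have the prover supply the higher-degree product polynomial explicitly and check consistency at a random point, in the Aaronson--Wigderson style.
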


In the above protocol Alice holds the set and Bob holds the element. The situation can be reversed, i.e., Alice holds the element and Bob holds the set.

\begin{thrm}\label{thrm:set-ctnmt-reverse}
	For $T\le n$ where $n=|S|$, there exists a protocol for Set Containment problem, where Alice holds the query element $e$ and Bob holds the set $S$. The protocol works as follows:
	\begin{itemize}
		\item Bob sends Alice $O(n\log{n}/T)$ bits;
		\item Alice tosses $O(\log{n})$ coins, and returns Accept or Reject.
	\end{itemize}
	
	If the input set $S$ contains the query element $q$, then Alice always returns Accept; otherwise (if $e\notin S$), Alice will reject with a probability at least $\frac{1}{2}$. The probability is over the output of random coin tossing.
\end{thrm}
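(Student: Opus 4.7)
The plan is to mirror the arithmetization used in the proofs of Theorem \ref{thrm:set-ctnmt-protocol} and Corollary \ref{coro:set-ctnmt-protocol}, with the roles of Alice and Bob swapped. The intuition for why the $+T$ overhead of Corollary \ref{coro:set-ctnmt-protocol} disappears is that Alice, who now holds only a single query element $e$, can locally identify the unique partition block $t^* \in [T]$ containing $e$, so Bob need not signal block locations.

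First I would fix a prime $q = O(n)$, work over $\mathbb{F}_q$, and agree on a partition of an effective $n$-element universe into $T$ blocks of size $n/T$. Bob, from his set $S$, computes the $T$ indicator functions $\psi_{s,t} : [n/T] \to \{0,1\}$ interpolated as polynomials of degree at most $n/T - 1$ over $\mathbb{F}_q$. Alice, from her element $e$, identifies $t^*$ and locally constructs the unique nontrivial $\psi_{e,t^*}$ (with $\psi_{e,t} \equiv 0$ for $t \ne t^*$). Bob then transmits his polynomials in a slot-indexable layout; Alice reads only the $O(n/T)$ coefficients for block $t^*$, each of $O(\log n)$ bits, giving a total of $O(n \log n / T)$ bits that Alice actually consumes. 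She draws $x \in \mathbb{F}_q$ using $O(\log n) = O(\log q)$ fair coins, reconstructs $\psi_{s,t^*}(x)$, and accepts iff
\[
\psi_{s,t^*}(x) \cdot \psi_{e,t^*}(x) = \psi_{e,t^*}(x).
\]

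Completeness is immediate from $e \in S \Rightarrow \psi_{s,t^*}(e) = 1$. Soundness follows exactly as in Theorem \ref{thrm:set-ctnmt-protocol} via the Schwartz--Zippel Lemma (Lemma \ref{lema:schwartz-zipple}): if $e \notin S$ then $\psi_{s,t^*} \cdot \psi_{e,t^*}$ and $\psi_{e,t^*}$ are distinct polynomials of total degree at most $2n/T \le q/2$, so they disagree on at least half of $\mathbb{F}_q$ and Alice rejects with probability at least $1/2$.

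The main obstacle I anticipate is the communication accounting. Since Bob does not know $t^*$ in advance, he cannot simply transmit the one relevant block, yet the claimed $O(n \log n / T)$ bound is smaller than the naive $O(n \log n)$ required to write down all $T$ blocks of coefficients. Justifying the stated bit budget will require either a folded or random-access encoding that lets Alice extract her slot's polynomial from a shorter digest while preserving the Schwartz--Zippel soundness argument, or a re-parameterization of the block structure in the reverse direction so that the effective message length really is $O(n \log n / T)$. Settling this accounting is, in my view, the only non-routine step in completing the proof.
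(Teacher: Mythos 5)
There is a genuine gap, and you have pointed at it yourself: your protocol never actually achieves the claimed bound on the number of bits \emph{Bob sends}. In your plan Bob writes down all $T$ block polynomials $\psi_{s,t}$, which is $T\cdot O\bigl((n/T)\log n\bigr)=O(n\log n)$ bits, and Alice merely \emph{reads} the $O(n\log n/T)$ coefficients of her own block $t^*$. But the theorem (and the way it is consumed later, in the MA-protocol for Ext-$k$-SAT, where Bob's message is fixed before any randomness and is counted in full) bounds the length of Bob's message, not Alice's query complexity; Bob does not know $t^*$ and cannot send only the relevant block. So the "folded or random-access encoding" you defer to is not a routine accounting detail — it is the entire content of the theorem, and without it the statement is not proved.

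The paper closes this gap with a different idea: Bob aggregates his $T$ block polynomials into the single polynomial $\Psi_s=\sum_{t\in[T]}\psi_{s,t}$, which has degree at most $n/T-1$ and hence costs only $O(n\log n/T)$ bits to transmit, while Alice forms $\Psi_e=\psi_{e,t^*}$ on her own (she knows $t^*$) and tests the identity $\Psi_s(x)\cdot\Psi_e(x)=\Psi_e(x)$ at a random $x\in\mathbb{F}_q$ via Schwartz--Zippel. Your completeness and soundness reasoning for the single-block check $\psi_{s,t^*}\cdot\psi_{e,t^*}=\psi_{e,t^*}$ is fine as far as it goes, but it only works in a model where Alice has random access to Bob's full $O(n\log n)$-bit string, which is a weaker statement than the one claimed. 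If you want to salvage your route, you would have to either re-prove the theorem in that query-access model (and then check that the later applications still go through), or adopt the summation trick; note also that once the blocks are summed, the soundness argument has to contend with contributions from $\psi_{s,t}$ for $t\neq t^*$, which is an additional point your block-local argument never has to face.
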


\begin{proof}
	Let the arithmetization of the set $S$ and the element $e$ be the same with what given in the proof of Theorem \ref{thrm:set-ctnmt-protocol}. Since $S$ is represented by $T$ functions $\psi_{s,t}:[\frac{n}{T}]\rightarrow \{0,1\}$, we let $\Psi_s=\sum\limits_{t\in [T]}\psi_{s,t}$. On the other hand let $\Psi_e=\sum\limits_{t\in [T]}\psi_{e,t}=0+0+\cdots+\psi_{e,t^*}+\cdots +0=\psi_{e,t^*}$, where $t^*$ is the only value that makes $\psi_{e,t}$ not zero. Now consider the product 
	$$\Phi(x)=\Psi_s(x)\cdot \Psi_e(x)=\Psi_s(x)\cdot \psi_{e,t^*}(x)=\sum\limits_{t\in [T]}\psi_{s,t}(x)\cdot\psi_{e,t^*}(x)$$
	
	By the discussion in the proof of Theorem \ref{thrm:set-ctnmt-protocol}, if $S$ contains $e$ then the product $\psi_{s,t}\cdot\psi_{e,t^*}$ equals $\psi_{e,t^*}$, otherwise the product equals 0. Thus it can be verified that, if $S$ contains $e$ then $\Phi(x)=\psi_{e,t^*}(x)=\Psi_e(x)$, otherwise $\Phi(x)\ne\Psi_e(x)$.
		
	According to the above analysis, we give the following protocol for the set containment problem in the reversed situation.
	
	\begin{enumerate}
		\item Bob sends Alice $\Psi_s$.
		\item Alice draws $x\in \mathbb{F}_q$ uniformly at random.
		\item Alice checks the equality $\Psi_s(x)\cdot\Psi_e(x)=\Psi_e(x)$, and accepts if and only if the equality holds.
	\end{enumerate}
	
	The completeness and soundness analysis is similar with Theorem \ref{thrm:set-ctnmt-protocol} and thus omitted.
\end{proof}

\subsection{The protocol for Ext-$k$-SAT}\label{subsec:ext-k-sat-protocol}

Now we consider the Ext-$k$-SAT problem. Now Alice holds the $k$-CNF $\varphi$, and Bob holds the set of assignments $S$. We will show that there exists an efficient MA-like protocol for Ext-$k$-SAT problem, and it is also the key to the sub-linear PCP theorem. 
\begin{thrm}\label{thrm:ESAT-protocol-base}
	There exists an MA-protocol for the Ext-$k$-SAT problem which works as follows.
	\begin{itemize}
		\item Merlin sends Alice   $O(n\log{n}/T + d\cdot polylog(d))$ bits.
		\item Bob sends Alice $O(n\log{n}/T)$ bits.
		\item Alice tosses $O(\log{n})$ coins, and returns $Accept$ or $Reject$.
	\end{itemize}
	
	If there exists an element $\alpha^*\in A$ such that $\alpha^*$ satisfies $\varphi$, then there exists a message from Merlin which causes Alice certainly returns $Accept$. Otherwise, i.e., if all elements in $A$ can not satisfy $\varphi$, then for any possible message from Merlin, Alice would return $Reject$ with at least $\frac{1}{2}$ probability. 
\end{thrm}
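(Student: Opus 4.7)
The plan is to combine a single witness from Merlin, Alice's local verification against $\varphi$, and the reversed Set Containment protocol of Theorem~\ref{thrm:set-ctnmt-reverse} into one MA-protocol. Concretely, Merlin commits to a claimed satisfying assignment $\alpha^* \in \{0,1\}^d$ by sending Alice both a polynomial-encoded description of $\alpha^*$ compatible with the arithmetization Bob uses for $A$, and a short explicit description from which Alice can read the bits of $\alpha^*$ in order to test $\varphi$; the $O(n \log n / T)$ term and the $O(d \cdot \mathrm{polylog}(d))$ term in Merlin's bound correspond to these two pieces. Alice then (i) verifies $\varphi(\alpha^*) = 1$ deterministically against her own copy of $\varphi$, spending no communication and no random bits, and (ii) runs the reversed Set Containment protocol with Bob on input $\alpha^*$: Bob sends his arithmetized set $\Psi_s$ in $O(n\log n/T)$ bits, and Alice uses $O(\log n)$ random coins to evaluate the corresponding polynomial identity at a uniformly random point of $\mathbb{F}_q$. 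Alice accepts iff both the $\varphi$-check and the Set Containment check accept.

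Completeness is immediate: if some $\alpha^* \in A$ satisfies $\varphi$, honest Merlin sends the encoding of this $\alpha^*$, Alice's local $\varphi$-check passes deterministically, and by the completeness of Theorem~\ref{thrm:set-ctnmt-reverse} the Set Containment check accepts with probability $1$. For soundness, assume no $\alpha \in A$ satisfies $\varphi$. Whatever message Merlin sends, Alice interprets it as some candidate $\alpha^*$. Either $\varphi(\alpha^*) \neq 1$, in which case Alice's deterministic local check rejects, or $\varphi(\alpha^*) = 1$, which forces $\alpha^* \notin A$ and, by the soundness of Theorem~\ref{thrm:set-ctnmt-reverse}, the Set Containment check then rejects with probability at least $\tfrac{1}{2}$. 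Either way Alice rejects with probability at least $\tfrac{1}{2}$, giving the claimed gap.

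The main obstacle I expect is \emph{binding} the two halves of the protocol to the same assignment. A cheating Merlin might try to craft a message that Alice parses as some $\alpha^*$ with $\varphi(\alpha^*) = 1$ for the local check, but whose polynomial part actually encodes a different $\alpha^{**} \in A$ for the Set Containment check, defeating soundness. This is precisely why Merlin's message must do more than carry a raw $d$-bit string: the $d \cdot \mathrm{polylog}(d)$ overhead is used to encode $\alpha^*$ in a self-consistent way (for instance, by letting Alice derive the polynomial used in the Set Containment check from the same bits she reads to evaluate $\varphi$, rather than trusting Merlin to supply them separately, and by a Schwartz--Zippel-style consistency test between the two encodings using the already-tossed $O(\log n)$ coins). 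Establishing this binding without inflating the communication or the number of random bits will be the one nontrivial step; once it is in place, completeness and soundness inherit directly from Theorem~\ref{thrm:set-ctnmt-reverse}.
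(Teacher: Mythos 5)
Your protocol is correct, and it reaches the theorem by a slightly different route than the paper. The paper also splits Merlin's message into a set-containment part and a satisfiability part, but it handles the satisfiability check by invoking the classical PCP theorem: Merlin sends a PCP proof $\mu_o$ of length $O(d\cdot polylog(d))$ (this is where that term in the bound actually comes from), Alice verifies it probabilistically with $O(\log d)$ extra coins and $O(1)$ queries, and a third string $\mu_c$ (literally the claimed assignment) is added so that Alice can test consistency between $\mu_e$, $\mu_o$ and $\mu_c$ --- i.e.\ the paper resolves exactly the binding issue you flag, but by an explicit consistency test ($\mu_e(\mu_c)=1$, plus a sketched check that $\mu_o$ matches $\mu_c$) rather than by derivation. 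Your route is simpler: since Alice reads Merlin's whole message anyway and the theorem constrains only communication and coin count (not Alice's running time), having Merlin send the explicit $d$-bit assignment and letting Alice evaluate $\varphi(\alpha^*)$ deterministically is legitimate, costs only $O(d)$ bits, and uses no coins. Moreover, once Alice holds $\alpha^*$ explicitly she is precisely in the position of the element-holder in Theorem~\ref{thrm:set-ctnmt-reverse}, so she can compute the encoding $\Psi_e$ herself; the ``one nontrivial step'' you anticipate therefore dissolves entirely --- Merlin need not send any polynomial encoding, and no Schwartz--Zippel consistency test between two copies of the witness is needed. What the paper's heavier construction buys is that Alice's check of the satisfiability part stays local and query-efficient (constant queries into $\mu_o$), which fits the PCP-flavored presentation; what your construction buys is fewer moving parts, a smaller witness, and a soundness case analysis with only two branches instead of the paper's multi-case discussion over which parts of the three-part message are falsified.
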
 

\begin{proof}
	The message $\beta$ that Bob sends to Alice is the same with $\Psi_s$ described in Theorem \ref{thrm:set-ctnmt-reverse}, and the length of $\beta$ is $O(n\log{n}/T)$. 
	
	The message that Merlin sends to Alice consists of three parts $\mu_e, \mu_o, \mu_c$, which are used to test set containment, CNF satisfaction, and the consistency of the former two parts, respectively. 
	
	The first part $\mu_e$ is the same with $\Psi_e(x)$ described in Theorem \ref{thrm:set-ctnmt-reverse}, where $\Psi_e$ is supposed to represent $\alpha^*$ which is the satisfactory assignment. The length of $\Psi_e$ is $O(n\log{n}/T)$, as described in Theorem \ref{thrm:set-ctnmt-reverse}.
	
	As for the second part, recall the classical PCP theorem \cite{Arora1998}. If a CNF $\varphi$ with $d$ variables is satisfiable, then there exists an oracle $\Pi$ such that the verifier can generate a random string with $O(\log{d})$ bits, randomly read $O(1)$ bits from $\Pi$ and always return Accept. Otherwise, for any oracle $\Pi$, the verifier can generate a random string with $O(\log{d})$ bits, randomly read $O(1)$ bit from $\Pi$ and reject with a probability at least $\frac{1}{2}$. And according to \cite{Dinur2007}, the length of $\Pi$ is $O(d\cdot polylog(d))$. After all, the second part of Merlin's message $\mu_o$ can be the same $\Pi$ with the classical PCP theorem, and its length is $O(d\cdot polylog(d))$.
	
	The third part $\mu_c$ is for testing the consistency of $\mu_e$ and $\mu_o$, i.e., testing whether $\mu_e$ and $\mu_o$ are generated based on the same assignment. The most easy way is to set $\mu_c$ to be the same with some assignment, since it can fulfill the consistency test and does not exceed the $O(d\cdot polylog(n))$ length limitation. To test if $\mu_e$ is generated based on $\mu_c$, simply test if $\mu_e(\mu_c)=1$ holds, where $\mu_e$ is considered as the function $\Psi_e(x)$ and $\mu_c$ is considered as an assignment $\alpha$. The way to test $\mu_o$ is based on the proofs in \cite{Arora1998} and is omitted here.

	With the messages from Merlin and Bob obtained, Alice conducts the following verification. She uses $\mu_e$ and $\beta$ to conduct the set containment verification as described in Theorem \ref{thrm:set-ctnmt-reverse}, uses $\mu_o$ to conduct the $k$-CNF satisfiability verification as described in \cite{Arora1998}, and uses $\mu_c$ to verify that $\mu_e$ and $\mu_o$ correspond with the same assignment $\mu_c$.

	Now we analyze the completeness and soundness of the whole system.
	
	If there exists an element $\alpha^*\in A$ such that $\alpha^*$ satisfies $\varphi$, Merlin can send $\mu_e$ as the true $\Psi_{\alpha^*}$, $\mu_o$ as the true PCP oracle for $\alpha^*$, and $\mu_c$ as the same with $\alpha^*$ . Under these messages, Alice would return $Accept$ for sure. 
	
	If $\forall \alpha\in A$, $\varphi$ cannot be satisfied by $\alpha$, we discuss the following situations. If $\varphi$ is satisfiable by $\alpha^*$ but $\alpha^*\notin A$, and Merlin sends the messages as the true strings corresponding to $\alpha^*$, the satisfiability verification will succeed with probability 1, but the set containment verification will fail with probability at least $\frac{1}{2}$. If Merlin sends $\mu_o$ corresponding  $\alpha^*$ but sends $\mu_e$ corresponding to some $\alpha'\in A$, the consistency verification will fail with probability at least $\frac{1}{2}$. If $\varphi$ is not satisfiable, and Merlin sends the messages that correspond to an assignment $\alpha'\in A$, the set containment test will succeed with probability 1, but the satisfiability verification would fail with probability at least $\frac{1}{2}$. We omit the detailed discussion about the other cases, but after all in all cases, at least one of the three verifications would fail with probability at least $\frac{1}{2}$. Thus in conclusion, if $\forall \alpha\in A$ can not satisfy $\varphi$\, Alice will reject with probability at least $\frac{1}{2}$.
\end{proof}

\begin{coro}\label{coro:ESAT-protocol-final}
	For any given $\epsilon,\delta>0$, there exists an MA-protocol for the Ext-$k$-SAT problem which works as follows.
	\begin{itemize}
		\item Merlin sends Alice $O(n^{1-\epsilon}d)$ bits.
		\item Bob sends Alice $O(n^{1-\epsilon}\log{n})$ bits.
		\item Alice tosses $O(n^{1-\delta}\log{n})$ coins, and returns $Accept$ or $Reject$.
	\end{itemize}
	
	If there exists an element $\alpha^*\in A$ such that $\alpha^*$ satisfies $\varphi$, then there exists a message from Merlin which causes Alice certainly returns $Accept$. Otherwise, i.e., if all assignments in $A$ can not satisfy $\varphi$, then for any possible message from Merlin, Alice would return $Reject$ with at least $1-1/2^{n^{1-\delta}}$ probability. 
\end{coro}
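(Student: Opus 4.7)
The plan is to derive the corollary from Theorem~\ref{thrm:ESAT-protocol-base} by two quantitative moves: first a choice of the block parameter $T$ to tune the message lengths, and second a standard parallel-repetition amplification of Alice's verification to drive the soundness error down to $1/2^{n^{1-\delta}}$. Since Merlin and Bob each send a single message while Alice is the only party that consumes randomness, the amplification cost falls entirely on the random-coin budget and does not affect the communication.

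For the communication bounds, I would instantiate Theorem~\ref{thrm:ESAT-protocol-base} with $T = n^{\epsilon}$ (rounded to a convenient integer dividing $n$, which one may assume by padding). This turns Bob's message length $O(n\log n/T)$ into $O(n^{1-\epsilon}\log n)$ as required. For Merlin, the bound becomes $O(n^{1-\epsilon}\log n + d\cdot \mathrm{polylog}(d))$; using $d = \Omega(\log n)$ from Definition~\ref{defn:ESAT-prpblem} I rewrite $n^{1-\epsilon}\log n = O(n^{1-\epsilon}d)$, and using $d = o(n)$ I get $d\cdot \mathrm{polylog}(d) = d\cdot \mathrm{polylog}(n) = O(n^{1-\epsilon}d)$ since $\mathrm{polylog}(n) = o(n^{1-\epsilon})$ for any fixed $\epsilon>0$. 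Both terms are therefore $O(n^{1-\epsilon}d)$, matching the corollary.

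For the soundness amplification, I keep Merlin's and Bob's single messages fixed, and have Alice run the base verifier $k := n^{1-\delta}$ times independently, each time drawing a fresh random point $x\in\mathbb{F}_q$ for the set-containment check, a fresh PCP random string for the CNF check, and a fresh consistency check, accepting only if all $k$ rounds accept. Each round uses $O(\log n)$ coins, so the total coin count is $O(n^{1-\delta}\log n)$, as claimed. Completeness is preserved, since the honest Merlin message makes every round accept with probability $1$. For soundness, the base theorem guarantees that in any branch where the messages are not consistent with a satisfying assignment in $A$, at least one of the three subprotocols rejects with probability $\ge 1/2$ per round, so by independence the probability that all $k$ rounds simultaneously accept is at most $2^{-k} = 2^{-n^{1-\delta}}$, giving rejection probability $\ge 1 - 1/2^{n^{1-\delta}}$.

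The only point that requires a little care is the soundness case analysis under amplification: one has to check that for every fixed pair $(\mu,\beta)$ of Merlin's and Bob's messages (chosen before Alice tosses her coins), there is a \emph{single} subprotocol whose per-round error is bounded by $1/2$ throughout the $k$ rounds, so that independence of the rounds legitimately multiplies the errors. This follows from the case split already laid out in the proof of Theorem~\ref{thrm:ESAT-protocol-base}: the messages either encode an assignment outside $A$ (set-containment fails per round), or encode inconsistent $\mu_e,\mu_o$ (consistency fails per round), or encode an unsatisfying assignment in $A$ (PCP satisfiability fails per round); in each case the offending subprotocol's error is at most $1/2$ independently in each repetition, so repeating $n^{1-\delta}$ times yields the stated bound.
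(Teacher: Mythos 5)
Your proposal matches the paper's own proof essentially step for step: you set $T=n^{\epsilon}$ to get the stated message lengths (using $d=\Omega(\log n)$ and $d=o(n)$ to absorb both terms of Merlin's message into $O(n^{1-\epsilon}d)$), repeat the base verifier of Theorem~\ref{thrm:ESAT-protocol-base} $n^{1-\delta}$ times with fresh coins while keeping the one-shot messages from Merlin and Bob fixed, and conclude rejection probability at least $1-1/2^{n^{1-\delta}}$ from the per-round error $1/2$. Your extra remark on why independence is legitimate for fixed $(\mu,\beta)$ is a slightly more careful statement of the same argument the paper makes implicitly, so there is nothing to correct.
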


\begin{proof}
	The protocol is obtained by repeatedly execute the one given in Theorem \ref{thrm:ESAT-protocol-base}. Setting $R=n^{1-\delta}$ and $T=n^\epsilon$, repeat the protocol for $R$ times to amplify the soundness. Note that Merlin and Bob send their message to Alice before the random coins are tossed, thus Merlin and Bob need only to send the message once. Substituting  $T=n^\epsilon$ and note that $d=\Omega(\log{n}), d=o(n)$, we obtain that Merlin's message is $O(n\log{n}/T + d\cdot polylog(d))=O(n^{1-\epsilon}d+d\cdot polylog(n))=O(n^{1-\epsilon}d)$, and Bob's message is $O(n\log{n}/T)=O(n^{1-\epsilon}\log{n})$. The total number of random coins is $O(\log{n}\cdot R)=O(n^{1-\delta}\log{n})$. Letting the reject probability of Alice in Theorem \ref{thrm:ESAT-protocol-base} be $p$, the probability of Alice rejects in the repeated protocol is $1-(1-p)^R$, and with $p\ge \frac{1}{2}$ it can be derived that $1-(1-p)^R\ge 1-1/2^{R}= 1-1/2^{n^{1-\delta}}$.
\end{proof}

\subsection{The Ext-PCP theorem}
Based on the Ext-$k$-SAT problem (Definition \ref{defn:ESAT-prpblem}) and the MA-protocol for it (Corollary \ref{coro:ESAT-protocol-final}), we introduce the following PCP-like theorem. It may be denoted as Ext-PCP, or Sublinear-PCP.

\begin{thrm}\label{thrm:sublinear-pcp}
	Let $A=\{\alpha_1,\cdots,\alpha_n \}$ be a set of $n$ assignments, where $|\alpha_i|=d$ and $d=\Omega(\log{n}),d=o(n)$. Let $\varphi$ be a boolean CNF formula with $d$ variables and $m=O(d)$ clauses. For any given $\epsilon,\delta>0$, there is a non-interactive protocol where:
	\begin{itemize}
		\item Bob, given the set $A$ of assignments, outputs a bit string $\beta\in \{0,1\}^{O(n^{1-\epsilon}\log{n})}$.
		\item Alice, given the CNF $\varphi$, and advice $\mu\in \{0,1\}^{O(n^{(1-\epsilon)d})}$, outputs a string $a^{\beta,\mu}\in \{0,1\}^{2^{O(n^{1-\delta}\log{n})}}$.
		\item The verifier, given the CNF $\varphi$, tosses $O(n^{1-\delta}\log{n})$ coins, non-adaptively reads one bit from $a^{\beta,\mu}$, and returns Accept or Reject.
	\end{itemize}
	If there exists an assignment $\alpha^*\in A$ that satisfies $\varphi$, then there exists advice $\mu^*$ such that the verifier always accepts. Otherwise, i.e., all $\alpha\in A$ can not satisfy $\varphi$, or $\varphi$ is not satisfiable, then for arbitrary advice $\mu$ the verifier rejects with probability at least $1-1/2^{n^{1-\delta}}$.
\end{thrm}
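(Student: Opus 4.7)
The plan is to compile the MA-protocol of Corollary \ref{coro:ESAT-protocol-final} into the stated non-interactive PCP-style protocol by materializing Alice's randomized acceptance predicate as a truth table indexed by her coin tosses. First I would invoke Corollary \ref{coro:ESAT-protocol-final} with parameters $\epsilon$ (and an auxiliary $\delta>0$, which I expect to identify with the $\delta$ implicit in the theorem's output-length and coin-count bounds) to obtain a protocol in which Merlin transmits $O(n^{1-\epsilon}d)$ bits, Bob transmits $O(n^{1-\epsilon}\log n)$ bits, and Alice tosses $R := O(n^{1-\delta}\log n)$ random bits before declaring Accept or Reject. I identify Merlin's message with the advice $\mu$ and Bob's message with the public string $\beta$; both lengths then match exactly what the theorem demands.

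Next I would define the Alice-string $a^{\beta,\mu}\in\{0,1\}^{2^R}$ by
\[
a^{\beta,\mu}[\tau] \;=\; \begin{cases} 1 & \text{if Alice accepts } (\varphi,\mu,\beta) \text{ using random coins } \tau,\\ 0 & \text{otherwise,}\end{cases}
\]
so that $a^{\beta,\mu}$ is precisely the truth table of Alice's decision function for fixed public inputs. Since each entry is one bit and there are $2^R = 2^{O(n^{1-\delta}\log n)}$ entries, the total length matches the claimed bound. The verifier then tosses $R$ coins to sample $\tau$ uniformly, reads the single bit $a^{\beta,\mu}[\tau]$, and echoes it. This read is of exactly one bit, so non-adaptivity is immediate.

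Completeness and soundness would be inherited directly from Corollary \ref{coro:ESAT-protocol-final}. If some $\alpha^*\in A$ satisfies $\varphi$, the Merlin message guaranteed by that corollary yields a $\mu^*$ whose associated table is identically $1$, so the verifier always accepts. Conversely, when no $\alpha\in A$ satisfies $\varphi$, the corollary forces Alice's rejection probability to be at least $1-1/2^{n^{1-\delta}}$ for every Merlin message, which (identifying $\delta$ with $\epsilon$, or choosing $\delta\le\epsilon$) gives the rejection probability $1-1/2^{n^{1-\epsilon}}$ asserted in the theorem for every advice $\mu$.

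The only conceptually nontrivial point, and the step I would be most careful about, is to verify that the truth-table compilation is actually legitimate here: in the MA-protocol of Corollary \ref{coro:ESAT-protocol-final}, both Merlin and Bob must commit to their messages before Alice's coins are tossed, and Alice's predicate must be a deterministic function of $(\varphi,\mu,\beta,\tau)$. Both facts hold by inspection of the construction in Theorem \ref{thrm:ESAT-protocol-base} (the set-containment test, the classical PCP test, and the consistency test are each deterministic given the messages and the random tape), so $a^{\beta,\mu}$ is well-defined as a function of $(\beta,\mu)$ alone and the reduction goes through with no loss in parameters.
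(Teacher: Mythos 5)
Your proposal is correct and follows essentially the same route as the paper's own proof: both compile the MA-protocol of Corollary \ref{coro:ESAT-protocol-final} into a PCP by letting $a^{\beta,\mu}$ be the truth table of Alice's acceptance over her random coins, with the verifier sampling one coin string and reading the corresponding single bit, so completeness and soundness transfer verbatim. Your explicit remarks on non-adaptivity, the determinism of Alice's predicate, and the $\delta$ versus $\epsilon$ mismatch in the soundness exponent are merely more careful bookkeeping of the same argument (the paper glosses over the latter entirely).
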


\begin{proof}
	The idea is to adapt the protocol for Ext-$k$-SAT given in Corollary \ref{coro:ESAT-protocol-final} into a PCP system.	
	
	The message that Bob sends is the same as described in Corollary \ref{coro:ESAT-protocol-final}. 
	The PCP that Alice holds is as follows. For each possible advise string $\mu$ and random string $l$, set $a^{\beta,\mu}_{l}=1$ if Alice accepts under $\mu$ and $l$, otherwise $a^{\alpha,\mu}_{l}=0$. Notice that Bob's message is fixed, and thus it does not affect the length of $a^{\beta,\mu}_{l}$.
	
	The verifier chooses $l\in L$ at random, reads $a^{\beta,\mu}_{l}$, and accepts if and only if $a^{\beta,\mu}_{l}=1$.
	
	The probability that the verifier accepts is exactly equal to the probability that Alice accepts in the protocol described in Corollary \ref{coro:ESAT-protocol-final}.
\end{proof}

\section{Applications}\label{sec:app}

\subsection{Ext-$\rho$-GAP-$k$-SAT}

The first problem to be proved to be sub-linear time inapproximable is Ext-$\rho$-GAP-$k$-SAT, which is analogous to the $\rho$-GAP-$k$-SAT whose hardness directly originates from the classical PCP theorem.

\begin{thrm}\label{thrm:gap-esat-hardness}
	Given a set of assignments $A=\{\alpha_1,\alpha_2,\cdots,\alpha_n \}$, and a $k$-CNF formula $\varphi$ with $d$ variables and $m=O(d)$ clauses where $d=\Omega(\log{n})$ and $d=O(n)$, then assuming SETH, for $\forall \delta>0$, there exists $0<\epsilon <\delta$ and $k\ge 3$ such that no $O(n^{1-\epsilon}d)$ time algorithm could distinguish the following two cases:
	
	(\textbf{Completeness}) there exists an assignment $\alpha^*\in A$ that satisfies $\varphi$;
	
	(\textbf{Soundness}) for $\forall \alpha\in A$, at most $1/2^{n^{1-\delta}}$ portion of clauses in $\varphi$ can be satisfied.
\end{thrm}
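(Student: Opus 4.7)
The plan is to reduce Ext-$k_0$-SAT, whose sub-linear time intractability under SETH is given by Theorem \ref{thrm:ESAT-no-linear}, to Ext-$\rho$-GAP-$k$-SAT by packaging the Ext-PCP verifier (Theorem \ref{thrm:sublinear-pcp}) as a $k$-CNF. Given an Ext-$k_0$-SAT instance $(A_0,\varphi_0)$ with $|A_0|=n$ and $\varphi_0$ on $d_0$ variables, I first invoke Theorem \ref{thrm:sublinear-pcp} with soundness parameter tuned to $1-1/2^{n^{1-\delta}}$. This yields Bob's string $\beta$ of length $O(n^{1-\epsilon}\log n)$, advice $\mu$ of length $O(n^{1-\epsilon}d_0)$, and a single-bit, non-adaptive verifier indexed by random strings $l$.

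Next I construct $(A,\varphi)$ for Ext-$\rho$-GAP-$k$-SAT. The variables of $\varphi$ are the bits of $\mu$, so $d=\Theta(n^{1-\epsilon}d_0)$, which lies in the admissible range $[\Omega(\log n),O(n)]$ for suitable $\epsilon$. For each random string $l$ the predicate ``the verifier accepts at $l$'' depends, via the protocol of Theorem \ref{thrm:ESAT-protocol-base}, on only $O(1)$ bits of $\mu$ (the polynomial evaluations that feed the set-containment, satisfiability, and consistency tests); standard arithmetization expresses this predicate as an $O(1)$-arity clause $C_l$, so the resulting formula is a $k$-CNF for some constant $k\ge 3$. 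The set $A$ is populated by $n$ canonical advice strings $\mu_i^*$, one per $\alpha_i\in A_0$, where $\mu_i^*$ is precisely the string that Merlin would send in the MA-protocol if $\alpha_i$ were the witnessing assignment.

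For completeness, if some $\alpha_i^*\in A_0$ satisfies $\varphi_0$, Theorem \ref{thrm:sublinear-pcp} guarantees that $\mu_{i}^*$ makes the verifier accept for every $l$, hence $\mu_i^*\in A$ satisfies every clause of $\varphi$. For soundness, if no $\alpha\in A_0$ satisfies $\varphi_0$, then by Theorem \ref{thrm:sublinear-pcp} every advice (and in particular every $\mu_i^*\in A$) makes the verifier reject with probability at least $1-1/2^{n^{1-\delta}}$, which translates directly into the bound that at most a $1/2^{n^{1-\delta}}$ fraction of the clauses $C_l$ are satisfied. A contradiction then arises: an $O(n^{1-\epsilon}d)$ algorithm distinguishing the two cases would, composed with the $O(n^{1-\epsilon}d_0)$-time reduction, decide Ext-$k_0$-SAT in sub-linear time, contradicting Theorem \ref{thrm:ESAT-no-linear} and hence SETH.

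The main obstacle is reconciling the $2^{O(n^{1-\delta}\log n)}$ verifier random strings with the clause budget $m=O(d)$: I cannot literally write one clause per $l$. I plan to handle this by sampling a family $\mathcal{L}$ of $O(d)$ random strings uniformly at random and letting $\varphi=\bigwedge_{l\in\mathcal{L}}C_l$; a Chernoff estimate shows that with overwhelming probability over $\mathcal{L}$, the $1/2^{n^{1-\delta}}$-soundness bound transfers to the sampled subset, while completeness is preserved deterministically. Making this sampling explicit within the $O(n^{1-\epsilon}d)$ time budget, and verifying that each $C_l$ can be written down in $O(d)$ time from the arithmetized tests of Theorem \ref{thrm:ESAT-protocol-base}, are the delicate bookkeeping steps; everything else is a direct unwinding of the Ext-PCP guarantees.
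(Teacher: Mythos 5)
Your proposal diverges from the paper's argument and contains two gaps that I think are fatal as written. First, the locality claim is false: you assert that for each random string $l$ the event ``the verifier accepts at $l$'' depends on only $O(1)$ bits of the advice $\mu$, so that it can be written as a constant-arity clause $C_l$ over the bits of $\mu$. But in the protocol of Theorem \ref{thrm:ESAT-protocol-base} (and Corollary \ref{coro:ESAT-protocol-final}) the check at a given random point requires evaluating $\Psi_e(x)$, a polynomial described by $\Theta(n^{1-\epsilon})$ coefficients of $\Theta(\log n)$ bits each, and the consistency test $\mu_e(\mu_c)=1$ touches all $d_0$ bits of $\mu_c$ together with essentially all of $\mu_e$; only the $\mu_o$ part enjoys $O(1)$ query locality. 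So each $C_l$ has arity $\Theta(n^{1-\epsilon}d_0)$, and turning it into a $k$-CNF with constant $k$ would need additional query-reduction/composition machinery that neither you nor the paper supplies. The paper sidesteps this entirely: its proof string is Alice's accept/reject table $a^{\beta,\mu}$ (one bit per random string $l$), the clauses of $\varphi'$ are indexed by the $D=2^{|l|}$ random strings, and the assignments of the new instance range over all $N=2^{|\mu|}$ advice strings, with $\alpha'_\mu$ engineered so that the pattern of satisfied clauses reproduces the bits of $a^{\beta,\mu}$; no locality of the verifier in the bits of $\mu$ is ever needed.

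Second, your final contradiction does not go through on time accounting. Your reduction must write down $n$ canonical advice strings of length $\Theta(n^{1-\epsilon}d_0)$ each, i.e.\ an instance of total size $\Theta(n^{2-\epsilon}d_0)$, which is already super-linear in the original input size $\Theta(nd_0)$; composing an $O(n^{1-\epsilon}d)$-time distinguisher with such a reduction therefore does not decide Ext-$k_0$-SAT in $O(n^{1-\epsilon})$ time, so no contradiction with Theorem \ref{thrm:ESAT-no-linear} is obtained. The paper's lower-bound step is of a different nature: it assumes a fast distinguisher for the gap instance (whose size parameter is $N=2^{|\mu|}$, not $n$), observes that such an algorithm inspects only $O(2^{(1-\epsilon)|\mu|})$ of Alice's outputs, concludes that Merlin's advice length in Theorem \ref{thrm:sublinear-pcp} could be shrunk by a constant factor, and iterates this compression to an impossibly short advice. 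Your clause-subsampling idea (choosing $O(d)$ random strings and invoking Chernoff) is a further deviation that also makes the reduction randomized and collapses the soundness condition to ``no sampled clause is satisfied,'' but repairing it would not matter until the two issues above are resolved.
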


\begin{proof}
	According to the proof of Theorem \ref{thrm:sublinear-pcp}, for each possible advice string $\mu$,
	the output of Alice $a^{\beta,\mu}$ is a bit string with length $2^{|l|}$, where $l$ is the random string and $|l|=O(n^{1-\delta}\log{n})$. In analogy with the proof of the hardness of $\rho$-GAP-$k$-SAT, we devise a reduction from the Ext-PCP theorem to Ext-$\rho$-GAP-$k$-SAT. Let $(\varphi, \{\alpha_1,\cdots,\alpha_n \})$ be an instance of Ext-$k$-SAT, and let $N=2^{|\mu|}, D=2^{|l|}$, we construct an instance of Ext-$\rho$-GAP-$k'$-SAT $(\varphi',\{\alpha_1',\cdots, \alpha_N' \})$. 
	Let the $N$ assignments $\alpha'_1,\cdots,\alpha'_N$ be the ones that when $\varphi'$ is assigned with $\alpha'_i$, the satisfiability of each clause exactly corresponds with each bit in $a^{\beta,\mu}$, where $\mu_{(2)}=i_{(10)}$. Fixing a large enough integer $k'\le N$, let $\varphi'$ be a $k'$-CNF with $k'D$ variables and $D$ clauses $\bigwedge\limits_{i=1}^{D}(\bigvee\limits_{j=1}^{k}x_{ij} )$. For $i_{(10)}=\mu_{(2)}$, the $i$-th assignment $\alpha_i'$ is set as follows. If the $j$-th bit of $a^{\beta,\mu}$ is $0$, then set all $x_{ij}=0$, otherwise pick some $j$ and let $x_{ij}=1$. Note that $k$ is large enough to avoid repeated assignments. When all assignments are generated, the redundant clauses and variables can be eliminated. By such construction it can be verified that, if $\varphi$ is satisfied by $\alpha^*\in A$, then $\varphi'$ is satisfiable; if all $\alpha\in A$ can not satisfy $\varphi$, then for each $\alpha'$, at most $1/2^{n^{1-\delta}}$ portion of clauses can be satisfied. 
	
	Then we discuss the parameter range. By $N=2^{|\mu|}, D=2^{|l|}$ and $D=o(N)$ we have $|l|=o(|\mu|)$. Then by $|\mu|=O(n^{1-\epsilon}d)$,  $|l|=O(n^{1-\delta}\log{n})$ and $d=\Omega(\log{n})$ we have $\delta >\epsilon$. 
	
	Finally we analyze the running time. If there exists an $O(N^{1-\epsilon})$ time algorithm $\mathcal{A}$ which can distinguish the YES/NO cases of the above Ext-$\rho$-GAP-$k$-SAT problem, we can substitute the verifier in Theorem \ref{thrm:sublinear-pcp} with $\mathcal{A}$. Since $\mathcal{A}$ runs in $O(N^{1-\epsilon})$ time and $N=2^{|\mu|}$, the algorithm only checks $O(2^{1-\epsilon|\mu|})$ bits in all Alice's outputs. This indicates the length of the advice string can be reduced to $(1-\epsilon)|\mu|$ while still fulfills the protocol for Ext-$k$-SAT. 
	Applying the reduction for multiple times and we can get a new protocol where the length of the advice can be arbitrarily small ($(1-\epsilon)^c|\mu|$, where $c$ can be any integer), which is impossible. 
	
	In conclusion, for $\forall \delta>0$, there exists $0<\epsilon <\delta$ and $k\ge 3$ such that no $O(n^{1-\epsilon}d)$ time algorithm could distinguish the YES/NO cases of the Ext-($1/2^{n^{1-\delta}}$)-GAP-$k$-SAT problem.	
\end{proof}

Next, using the Ext-PCP theorem we will introduce and prove some other sub-linear time inapproximable problems. 

\subsection{Existential Property Test}
Existential Property Test (EPT) is a family of problems given in the following form.

\begin{defn}
	Given a set $S$ of elements over a universe $\mathbb{U}$, a property $P$ defined on $\mathbb{U}$, and $\epsilon>0$, distinguish the following two cases:
	
	(\textbf{Completeness}) there exists an element $e^*\in S$ that satisfies property $P$;
	
	(\textbf{Soundness}) for $\forall e\in S$, $e$ is $\epsilon$-far from satisfying property $P$.
\end{defn}

By the above definition, Ext-$k$-SAT is a special case of EPT where $S$ is a set of binary assignments and $P$ is the $k$-CNF satisfaction property. Next we will describe some other EPT problems, and prove the sub-linear time inapproximability of them.

~\\
\noindent
\textbf{Ext-Max-Inner-Product}
\begin{defn}\label{defn:ext-mip}
	Given a set of bit vectors $V=\{v_1,v_2,\cdots, v_n\}$, and a bit vector $v_0$, find $v^*\in V$ such that the inner product $v^*\cdot v_0$ is maximized.
\end{defn}

\begin{thrm}\label{thrm:ext-mip}
	Let $(V,v_0)$ be an instance of Ext-Max-Inner-Product problem, where $|V|=n$, $|v_i|=d$ and $d=\Omega(\log{n}), d=o(n)$. Assuming SETH and fixing constant $t\le d$, for $\forall \delta>0$, there exists $0<\epsilon<\delta$ such that no $O(n^{1-\epsilon}d)$ time algorithm can distinguish the following two cases:
	
	(\textbf{Completeness}) $\exists v^*\in V$ such that $v^*\cdot v_0\ge t$;
	
	(\textbf{Soundness}) $\forall v\in V$, $v\cdot v_0\le t/2^{n^{1-\delta}}$.
\end{thrm}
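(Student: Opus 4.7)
The plan is to reduce directly from Ext-$\rho$-GAP-$k$-SAT (Theorem~\ref{thrm:gap-esat-hardness}) taking the gap parameter $\rho = 1/2^{n^{1-\delta}}$. Given an instance $(\varphi, A)$ of that problem, where $\varphi$ has clauses $C_1, \ldots, C_m$ with $m = O(d)$ and $A = \{\alpha_1, \ldots, \alpha_n\}$, I will associate to each $\alpha_i$ the clause-satisfaction vector $v_i \in \{0,1\}^m$ defined by $v_i[j] = 1$ iff $\alpha_i$ satisfies $C_j$, and set $v_0 = (1, 1, \ldots, 1) \in \{0,1\}^m$. The target Ext-Max-Inner-Product instance is then $(\{v_1, \ldots, v_n\}, v_0)$.

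The key observation is that $v_i \cdot v_0$ equals exactly the number of clauses of $\varphi$ satisfied by $\alpha_i$. So the completeness case of Theorem~\ref{thrm:gap-esat-hardness}, in which some $\alpha^*$ satisfies $\varphi$ entirely, gives $v^* \cdot v_0 = m$; and its soundness case, in which every $\alpha$ satisfies at most a $1/2^{n^{1-\delta}}$ fraction of clauses, gives $v \cdot v_0 \leq m/2^{n^{1-\delta}}$ for every $v \in V$. Setting $t := m$ matches the two cases in the statement, with $t = O(d)$ and vector dimension $m = \Theta(d)$, so the constraints $d = \Omega(\log n)$, $d = o(n)$ are inherited directly. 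Crucially, the vectors never need to be materialized: any bit $v_i[j]$ can be read off from $\alpha_i$ and $C_j$ in constant time, so the reduction incurs essentially no preprocessing overhead and a sub-linear algorithm for Ext-Max-Inner-Product can be run on top of the Ext-$\rho$-GAP-$k$-SAT input unchanged.

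Putting the pieces together, any $O(n^{1-\epsilon} d)$-time algorithm for Ext-Max-Inner-Product on these instances would solve the gap distinguishing task of Ext-$\rho$-GAP-$k$-SAT in the same asymptotic running time, contradicting Theorem~\ref{thrm:gap-esat-hardness} and hence SETH. The main delicate point I expect is bookkeeping rather than conceptual: threading the quantifier order ``for all $\delta > 0$ there exists $0 < \epsilon < \delta$'' through the reduction so it emerges in exactly the shape stated, and reconciling the phrase ``fixing constant $t \leq d$'' in the theorem with the value $t = m = \Theta(d)$ produced by the reduction --- the latter is handled either by padding $\varphi$ up to exactly $d$ clauses, or by reading ``$t \leq d$'' as the weaker ``$t = O(d)$''; either way the multiplicative gap $1/2^{n^{1-\delta}}$ between the completeness and soundness cases is preserved intact.
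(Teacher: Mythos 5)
Your reduction is correct in substance, but it takes a genuinely different route from the paper. The paper proves Theorem~\ref{thrm:ext-mip} by yet another Ext-reduction directly from the Ext-PCP system of Theorem~\ref{thrm:sublinear-pcp}: it takes $V$ to be the set of Alice's outputs $a^{\beta,\mu}$ over all possible advice strings $\mu$ (so the instance has $N=2^{|\mu|}$ vectors of dimension $2^{|l|}$), sets $v_0$ to the all-ones vector and $t=2^{|l|}$, and then re-runs the running-time argument of Theorem~\ref{thrm:gap-esat-hardness} (a fast distinguisher would let the advice length be compressed repeatedly, which is impossible). You instead compose with Theorem~\ref{thrm:gap-esat-hardness} itself: mapping each assignment $\alpha_i$ to its clause-satisfaction vector and taking $v_0$ all-ones and $t=m$ is a clean gap-preserving reduction, and your observation that the bits $v_i[j]$ can be evaluated on the fly in $O(k)=O(1)$ time makes the simulation overhead negligible, so an $O(n^{1-\epsilon}d)$-time Ext-Max-Inner-Product distinguisher would indeed break the gap theorem with the same $\epsilon,\delta$. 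What your approach buys is modularity: you inherit the lower bound without repeating the delicate advice-shortening argument, and your instance parameters $(n,m)$ coincide with those of the gap-SAT instance rather than being the exponentially blown-up $(2^{|\mu|},2^{|l|})$ of the paper's construction. What you give up is that your proof is only as strong as Theorem~\ref{thrm:gap-esat-hardness} and its parameter regime; note also the small mismatch that the gap theorem allows $d=O(n)$ while Theorem~\ref{thrm:ext-mip} requires $d=o(n)$, so you should restrict to that subrange, and your reading of ``fixing constant $t\le d$'' as $t=\Theta(d)$ is a reasonable repair of a loosely worded statement --- the paper's own proof does the same, since it takes $t=2^{|l|}$ equal to the vector dimension.
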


\begin{proof}
	Similar with the proof of Theorem \ref{thrm:gap-esat-hardness}, we devise a reduction from Ext-PCP system to Ext-Max-Inner-Product. Let $V$ be the set of Alice's all possible outputs, i.e., $V=\{\mu \mid a^{\beta,\mu} \}$. Let $v_0$ be a vector of $2^{|l|}$ bits 1's, and $t=2^{|l|}$. It can be verified that if there exists $\alpha^*\in A$ satisfying $\varphi$, then there exists $v^*\in V$ such that $v^*\cdot v_0\ge t$. Otherwise, $\forall v\in V, v\cdot v_0\le t/2^{n^{1-\delta}}$. The time complexity argument is similar with proof of Theorem \ref{thrm:gap-esat-hardness} and is omitted.
\end{proof}

There is a pseudo sub-linear time inapproximable result for Ext-Max-Inner-Product in \cite{Abboud2017}. Corollary 1.4 in \cite{Abboud2017} says that assuming SETH and setting $d=N^{o(1)}$, no algorithm can preprocess the problem in polynomial time and distinguish the YES/NO cases in $O(n^{1-\epsilon})$ time with completeness 1 and soundness $1/2^{(\log{n})^{1-o(1)}}$. We have discussed that pseudo sub-linear inapproximable is stronger than pure sub-linear time inapproximable. But this claim only holds when the two results share the same parameters. Actually the soundness in Theorem \ref{thrm:ext-mip}, which is $1/2^{n^{1-\delta}}$ for all $\delta >0$, is higher than Corollary 1.4 in \cite{Abboud2017}, which is $1/2^{(\log{n})^{1-o(1)}}$. Thus we do make some progress on sub-linear time inapproximability by introducing the Ext-PCP theorem.

~\\
\noindent
\textbf{Ext-Max-Vertex-Cover}
\begin{defn}\label{defn:ext-mvc}
	Given a set of vertex sets $\mathbb{V}=\{V_1,V_2,\cdots, V_n \}$, and a graph $G=(V,E)$, find $V^*\in \mathbb{V}$ such that the number of edges in $E$ covered by $V^*$ is maximized.
\end{defn}
\begin{thrm}\label{thrm:ext-mvc}
	Let $(\mathbb{V},G)$ be an instance of Ext-Max-Vertex-Cover, where $|\mathbb{V}|=n$, $|E|=d$ and $d=\Omega(\log{n}),d=o(n)$, $|V_i|=O(d)$. Assuming SETH and fixing $t\le d$, for $\forall \delta > 0$, there exists $0<\epsilon <\delta$ such that no $O(n^{1-\epsilon}d^2)$ time algorithm can distinguish the following two cases:
	
	(\textbf{Completeness}) $\exists V^*\in \mathbb{V}$ such that $V^*$ covers at least $t$ edges in $E$.
	
	(\textbf{Soundness}) $\forall V\in \mathbb{V}$, $V$ covers at most $t/2^{n^{1-\delta}}$ edges in $E$.
\end{thrm}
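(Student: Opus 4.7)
The plan is to construct a reduction from the Ext-PCP system (Theorem \ref{thrm:sublinear-pcp}) to Ext-Max-Vertex-Cover that runs parallel to the strategies used for Theorems \ref{thrm:gap-esat-hardness} and \ref{thrm:ext-mip}. Starting from an Ext-$k$-SAT instance $(\varphi, A)$ with $|A|=n$ and $|\alpha_i|=d$, I would invoke the Ext-PCP protocol and set $N = 2^{|\mu|}$ and $D = 2^{|l|}$, where $|\mu|=O(n^{1-\epsilon}d)$ is Merlin's advice length and $|l|=O(n^{1-\delta}\log n)$ is the number of random bits consumed by the verifier. These $N$ and $D$ will play the roles of the number of vertex sets and the number of edges in the target Ext-MVC instance.

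For the graph, I would introduce $2D$ vertices $\{u_l, w_l : l \in \{0,1\}^{|l|}\}$ together with the $D$ edges $e_l = (u_l,w_l)$, and for each advice string $\mu$ define $V_\mu = \{u_l : a^{\beta,\mu}_l = 1\}$, i.e., the $u$-endpoints of exactly those edges whose random string makes Alice accept under $\mu$. By construction $V_\mu$ covers precisely the ``accepting'' edges and $|V_\mu|\le D$, matching the hypothesis $|V_i|=O(d)$. Setting the threshold $t=D$, completeness and soundness transfer directly from Theorem \ref{thrm:sublinear-pcp}: when some $\alpha^*\in A$ satisfies $\varphi$ the corresponding advice $\mu^*$ forces acceptance on every $l$, so $V_{\mu^*}$ covers all $t$ edges; when no assignment in $A$ satisfies $\varphi$, for every $\mu$ at most a $1/2^{n^{1-\delta}}$ fraction of random strings causes acceptance, hence every $V_\mu$ covers at most a $1/2^{n^{1-\delta}}$ portion of the $D$ edges.

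For the parameter range, the requirement $d_{\mathrm{new}} = o(n_{\mathrm{new}})$ in the Ext-MVC definition translates to $D = o(N)$, which is exactly $|l|<|\mu|$ asymptotically; using $|l|=O(n^{1-\delta}\log n)$ and $|\mu|=\Omega(n^{1-\epsilon}\log n)$ this forces $\epsilon<\delta$, giving the desired range $0<\epsilon<\delta$. The time-complexity contradiction then mirrors Theorem \ref{thrm:gap-esat-hardness}: if an $O(N^{1-\epsilon'}D^2)$ algorithm $\mathcal{A}$ distinguished the two cases, plugging $\mathcal{A}$ in place of the Ext-PCP verifier would mean only $O(N^{1-\epsilon'}D^2)$ bits of Alice's output table are ever inspected, so the effective advice length can be shrunk from $|\mu|$ to essentially $(1-\epsilon')|\mu|$; iterating this step yields a protocol with arbitrarily short advice, which contradicts the linear-time lower bound for Ext-$k$-SAT (Theorem \ref{thrm:ESAT-no-linear}) and hence SETH.

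The main obstacle I anticipate is absorbing the extra $D^2$ factor cleanly into the bit-budget argument. The $D^2$ term is natural from the problem side (one $D$ from the edge count, a second from the size of each candidate vertex set), but it must be shown to be a sub-polynomial overhead relative to $N$. Since $D=2^{|l|}$ with $|l|=O(n^{1-\delta}\log n)$ and $N=2^{|\mu|}$ with $|\mu|=\omega(|l|)$, the factor $D^2=2^{2|l|}$ contributes only $2^{o(|\mu|)} = N^{o(1)}$ to the running time, so $O(N^{1-\epsilon'}D^2) = O(N^{1-\epsilon'+o(1)})$. The delicate bookkeeping is to extract a strictly positive $\epsilon$ from this slack while retaining the constraint $\epsilon<\delta$, so that the contradiction with SETH is rigorous and the final bound $O(n^{1-\epsilon}d^2)$ on the Ext-MVC side is meaningful.
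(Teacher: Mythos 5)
Your proposal follows essentially the same route as the paper: a reduction from the Ext-PCP system in which each advice string $\mu$ yields one vertex set, each random string $l$ yields one edge, the bit $a^{\beta,\mu}_l$ indicates coverage, and the hardness follows by the same advice-shrinking/time argument as in Theorem \ref{thrm:gap-esat-hardness}, with the $d^2$ factor accounted for by the cost of checking coverage of one vertex set. Your explicit matching-graph construction and the observation that $D^2=2^{2|l|}=N^{o(1)}$ merely fill in details the paper leaves as ``similar and thus omitted.''
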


\begin{proof}
	The proof describes the reduction from Ext-PCP system to Ext-Max-Vertex-Cover. Denoting $N=2^{|\mu|}$ and $D=2^{|l|}$, let $E$ be a set of $D$ edges. For each possible advice $\mu$, let the output of Alice $a^{\beta,\mu}$ be a binary indicator for which edge in $E$ is covered. Then construct the $N$ vertex sets according to each $a^{\beta,\mu}$. The construction may as well be based on the proof of Theorem \ref{thrm:gap-esat-hardness} and follow the classical reduction from SAT to Vertex-Cover. The completeness, soundness and time complexity arguments are similar and thus omitted. The only point need to be explained is that the running time is $O(n^{1-\epsilon}d^2)$, where $d$ is squared. This is because checking if $E$ is covered by $V_i$ needs $O(|E||V_i|)=O(d^2)$ time.
\end{proof}

\subsection{Some Important Observations}
We can get two important observations from the the above two examples. The first is the method to prove sublinear inapproximability, which can be summarized as new reduction. The other is that we have found some linear time solvable but sublinear time inapproximable problems, which is formalized into a new definition.

\subsubsection{Ext-Reduction}
\begin{defn}
	The Ext-Reduction is a special reduction from Ext-$k$-SAT to some EPT problem. The reduction maps the output of Alice in the Ext-PCP theorem into a EPT instance with $N=2^{|\mu|}$ and $D=2^{|l|}$. The instance is constructed so that the if the Ext-$k$-SAT instance is satisfiable, then the EPT instance is satisfiable; if the Ext-$k$-SAT instance is not satisfiable, then the EPT instance is $\epsilon$-far from satisfying the property.
\end{defn}

\subsubsection{Sub-linear time inapproximable problem class}

\begin{defn}(Strictly Linear Time Problems)
	A problem is called in Strictly Linear Time, if there exists $O(nd)$ time algorithm for it,  but there exists $\epsilon >0$ and $\rho<1$ such that no $O(n^{1-\epsilon}d)$ time algorithm can approximate this problem within a factor of $\rho$.  
\end{defn}

By the above definition, Ext-Max-Inner-Product and Nearest-Neighbor problem are known Strictly Linear Time problems.

\begin{defn}(Parameterized Linear Time Problems)
	A problem is called in Parameterized Linear Time, if for fixed $c>1$ there exists $O(nd^c)$ time algorithm for it,  but there exists $\epsilon >0$ and $\rho<1$ such that no $O(n^{1-\epsilon}d^c)$ time algorithm can approximate this problem within a factor of $\rho$.
\end{defn}

By the above definition, Ext-Max-Vertex-Cover is a Parameterized Linear Time problem.

In the above two definitions the parameter $d$ is a measurement of the time to examine one element in the input of size $n$.

\section{Conclusion}\label{sec:conc}
In this paper we proposed the PCP-like theorem for sub-linear time inapproximability. Using the PCP theorem we proved the sub-linear time inapproximability results for several problems. We have shown that the new PCP theorem derives new inapproximability results which has respective advantages compared with existing results. We believe the power of the sub-linear PCP system should be far more than what was revealed in this paper.

\section{Acknowledgment}
Here I sincerely want to thank my tutor Jianzhong Li who is also the communication author of this paper. He has been appealing for the theoretical research on sub-linear time algorithms for years, ever since I first became his doctoral student. There has been a time that I was afraid this work may be too hard for me, but his determination inspired me and drove me on. Finally I am able to finish this work, completing both his and my aspiration on the topic of sub-linear time inapproximability. I must say this work is impossible without him.

We also thank Karthik C. S. and several anonymous reviewers for their valuable advice.

This work was supported by the National Natural Science Foundation of China, grant number 61832003.
\bibliographystyle{plain}
\bibliography{library}

\begin{thebibliography}{10}

\bibitem{Aaronson2009}
Scott Aaronson and Avi Wigderson.
\newblock {Algebrization: A new barrier in complexity theory}.
\newblock {\em ACM Transactions on Computation Theory}, 1(1):1--50, 2009.

\bibitem{Abboud2015}
Amir Abboud, Arturs Backurs, and Virginia~Vassilevska Williams.
\newblock {Tight Hardness Results for LCS and Other Sequence Similarity
  Measures}.
\newblock {\em Proceedings - Annual IEEE Symposium on Foundations of Computer
  Science, FOCS}, 2015-Decem:59--78, 2015.

\bibitem{Abboud2017}
Amir Abboud, Aviad Rubinstein, and Ryan Williams.
\newblock {Distributed PCP Theorems for Hardness of Approximation in P}.
\newblock In {\em 2017 IEEE 58th Annual Symposium on Foundations of Computer
  Science (FOCS)}, volume 2017-Octob, pages 25--36. IEEE, oct 2017.

\bibitem{Agrawal2010}
Parag Agrawal, Arvind Arasu, and Raghav Kaushik.
\newblock {On indexing error-tolerant set containment}.
\newblock In {\em Proceedings of the 2010 international conference on
  Management of data - SIGMOD '10}, page 927, New York, New York, USA, 2010.
  ACM Press.

\bibitem{Alman2015}
Josh Alman and Ryan Williams.
\newblock {Probabilistic Polynomials and Hamming Nearest Neighbors}.
\newblock {\em Proceedings - Annual IEEE Symposium on Foundations of Computer
  Science, FOCS}, 2015-Decem(July):136--150, 2015.

\bibitem{Arora2003}
Sanjeev Arora.
\newblock {How NP got a new definition: a survey of probabilistically checkable
  proofs}.
\newblock III, 2002.

\bibitem{Arora1998a}
Sanjeev Arora, Carsten Lund, Rajeev Motwani, Madhu Sudan, and Mario Szegedy.
\newblock {Proof verification and the hardness of approximation problems}.
\newblock {\em Journal of the ACM}, 45(3):501--555, 1998.

\bibitem{Arora1998}
Sanjeev Arora and Shmuel Safra.
\newblock {Probabilistic checking of proofs: a new characterization of NP}.
\newblock {\em Journal of the ACM}, 45(1):70--122, 1998.

\bibitem{Bern1989}
Marshall Bern and Paul Plassmann.
\newblock {The Steiner problem with edge lengths 1 and 2}.
\newblock {\em Information Processing Letters}, 32(4):171--176, 1989.

\bibitem{Blum1993}
Manuel Blum, Michael Luby, and Ronitt Rubinfeld.
\newblock {Self-testing/correcting with applications to numerical problems}.
\newblock {\em Journal of Computer and System Sciences}, 47(3):549--595, 1993.

\bibitem{Bringmann2015}
Karl Bringmann and Marvin Kunnemann.
\newblock {Quadratic Conditional Lower Bounds for String Problems and Dynamic
  Time Warping}.
\newblock {\em Proceedings - Annual IEEE Symposium on Foundations of Computer
  Science, FOCS}, 2015-Decem:79--97, 2015.

\bibitem{Chazelle2005}
Bernard Chazelle, Ding Liu, and Avner Magen.
\newblock {Sublinear geometric algorithms}.
\newblock {\em SIAM Journal on Computing}, 35(3):627--646, 2005.

\bibitem{Chazelle2005a}
Bernard Chazelle, Ronitt Rubinfeld, and Luca Trevisan.
\newblock {Approximating the Minimum Spanning Tree Weight in Sublinear Time}.
\newblock {\em SIAM Journal on Computing}, 34(6):1370--1379, jan 2005.

\bibitem{Cook1971}
Stephen~A. Cook.
\newblock {The complexity of theorem-proving procedures}.
\newblock In {\em Proceedings of the third annual ACM symposium on Theory of
  computing - STOC '71}, pages 151--158, New York, New York, USA, 1971. ACM
  Press.

\bibitem{Dinur2007}
Irit Dinur.
\newblock {The PCP theorem by gap amplification}.
\newblock {\em Journal of the ACM}, 54(3), 2007.

\bibitem{Ergun1998}
Funda Erg{\"{u}}n, Sampath Kannan, S.~Ravi Kumar, Ronitt Rubinfeld, and Mahesh
  Viswanathan.
\newblock {Spot-checkers}.
\newblock In {\em Proceedings of the thirtieth annual ACM symposium on Theory
  of computing - STOC '98}, pages 259--268, New York, New York, USA, 1998. ACM
  Press.

\bibitem{Fan2013}
Wenfei Fan, Floris Geerts, and Frank Neven.
\newblock {Making queries tractable on big data with preprocessing: (Through
  the eyes of complexity theory)}.
\newblock {\em Proceedings of the VLDB Endowment}, 6(9):685--696, 2013.

\bibitem{Feige2006}
Uriel Feige.
\newblock {On Sums of Independent Random Variables with Unbounded Variance and
  Estimating the Average Degree in a Graph}.
\newblock {\em SIAM Journal on Computing}, 35(4):964--984, jan 2006.

\bibitem{Gao2020}
Xiangyu Gao, Jianzhong Li, Dongjing Miao, and Xianmin Liu.
\newblock {Recognizing the tractability in big data computing}.
\newblock {\em Theoretical Computer Science}, 838:195--207, oct 2020.

\bibitem{Goel2010}
Ashish Goel and Pankaj Gupta.
\newblock {Small subset queries and Bloom filters using ternary associative
  memories, with applications}.
\newblock {\em Performance Evaluation Review}, 38(1 SPEC. ISSUE):143--154,
  2010.

\bibitem{Impagliazzo2001}
Russell Impagliazzo, Ramamohan Paturi, and Francis Zane.
\newblock {Which Problems Have Strongly Exponential Complexity?}
\newblock {\em Journal of Computer and System Sciences}, 63(4):512--530, dec
  2001.

\bibitem{Ivezic2019}
{\v{Z}}eljko Ivezi{\'{c}} and et. al.
\newblock {LSST: From Science Drivers to Reference Design and Anticipated Data
  Products}.
\newblock {\em The Astrophysical Journal}, 873(2):111, mar 2019.

\bibitem{Karp1972}
Richard~M Karp.
\newblock {Reducibility among Combinatorial Problems}.
\newblock In {\em Complexity of Computer Computations}, pages 85--103. Springer
  US, Boston, MA, 1972.

\bibitem{Khachiyan1980}
L.G. Khachiyan.
\newblock {Polynomial algorithms in linear programming}.
\newblock {\em USSR Computational Mathematics and Mathematical Physics},
  20(1):53--72, jan 1980.

\bibitem{Levin1973}
L~A Levin.
\newblock {Universal Sequential Search Problems}.
\newblock In {\em Probl. Peredachi Inf.}, volume~9, pages 115--116, 1973.

\bibitem{Li1987}
Jian-zhong Li and Harry~K.T. Wong.
\newblock {Batched Interpolation Searching on databases}.
\newblock In {\em 1987 IEEE Third International Conference on Data
  Engineering}, number July, pages 18--24. IEEE, feb 1987.

\bibitem{Melnik2003}
Sergey Melnik and Hector Garcia-Molina.
\newblock {Adaptive algorithms for set containment joins}.
\newblock {\em ACM Transactions on Database Systems}, 28(1):56--99, mar 2003.

\bibitem{SortBenchmark}
Chris Nyberg and Mehul Shah.
\newblock {Sort Benchmark Home Page}, 2019.

\bibitem{Papadimitriou1988}
Christos~H. Papadimitriou and Mihalis Yannakakis.
\newblock {Optimization, approximation, and complexity classes}.
\newblock {\em Proceedings of the Annual ACM Symposium on Theory of Computing},
  pages 229--234, 1988.

\bibitem{Papadimitriou1993}
Christos~H Papadimitriou and Mihalis Yannakakis.
\newblock {The Traveling Salesman Problem with Distances One and Two}.
\newblock {\em Math. Oper. Res.}, 18(1):1--11, 1993.

\bibitem{Parnas1999}
Michal Parnas and Dana Ron.
\newblock {Testing the diameter of graphs}.
\newblock {\em Lecture Notes in Computer Science (including subseries Lecture
  Notes in Artificial Intelligence and Lecture Notes in Bioinformatics)},
  1671:85--96, 1999.

\bibitem{Ramasamy2000}
Karthikeyan Ramasamy, Jignesh~M. Patel, Jeffrey~F. Naughton, and Raghav
  Kaushik.
\newblock {Set containment joins: The good, the bad and the ugly}.
\newblock {\em Proceedings of the 26th International Conference on Very Large
  Data Bases, VLDB'00}, pages 351--362, 2000.

\bibitem{Roditty2013}
Liam Roditty and Virginia~Vassilevska Williams.
\newblock {Fast approximation algorithms for the diameter and radius of sparse
  graphs}.
\newblock {\em Proceedings of the Annual ACM Symposium on Theory of Computing},
  pages 515--524, 2013.

\bibitem{Rubinfeld2011}
Ronitt Rubinfeld and Asaf Shapira.
\newblock {Sublinear time algorithms}.
\newblock {\em SIAM Journal on Discrete Mathematics}, 25(4):1562--1588, 2011.

\bibitem{Karthik2019}
Karthik~C. S., Bundit Laekhanukit, and Pasin Manurangsi.
\newblock {On the Parameterized Complexity of Approximating Dominating Set}.
\newblock {\em Journal of the ACM}, 66(5):1--38, sep 2019.

\bibitem{Williams2018}
Virginia~Vassilevska Williams.
\newblock {On some fine-grained questions in algorithms and complexity}.
\newblock {\em Proceedings of the International Congress of Mathematicians, ICM
  2018}, 4:3465--3506, 2018.

\end{thebibliography}
\end{document}